\DeclareMathOperator{\cross}{cross}
\DeclareMathOperator{\col}{c}
\DeclareMathOperator{\leftp}{left}
\spnewtheorem{numclaim}{Claim}{\itshape}{\rmfamily}
\begin{document}

\title{Many-to-One Boundary Labeling with Backbones}

\author{Michael A. Bekos\inst{1} \and Sabine Cornelsen\inst{2} \and Martin Fink\inst{3} \and Seokhee Hong\inst{4} \and Michael~Kaufmann\inst{1} \and Martin N\"ollenburg\inst{5} \and Ignaz Rutter\inst{5} \and Antonios Symvonis\inst{6}}

\institute{
    Institute for Informatics, University of  T\"ubingen, Germany\\
    \email{$\{$bekos,mk$\}$@informatik.uni-tuebingen.de}
    \and
    Department of Computer and Information Science, University of
    Konstanz, Germany\\
    \email{sabine.cornelsen@uni-konstanz.de}
    \and
    Lehrstuhl f\"ur Informatik I, Universit\"at W\"urzburg, Germany\\
    \email{martin.a.fink@uni-wuerzburg.de}
    \and
    School of Information Technologies, University of Sydney, Australia\\
    \email{shhong@it.usyd.edu.au}
    \and
    Karlsruhe Institute of Technology, Germany\\
    \email{noelle@ira.uka.de,~rutter@kit.edu}
    \and
    School of Applied Mathematics and Physical Sciences,
    NTUA, Greece\\
    \email{symvonis@math.ntua.gr}
}

\maketitle

\begin{abstract}
In this paper we study  \emph{many-to-one boundary labeling with
backbone leaders}. In this new many-to-one model, a horizontal
backbone reaches out of each label into the feature-enclosing
rectangle. Feature points that need to be connected to this label
are linked via vertical line segments to the backbone. We present
dynamic programming algorithms for label number and total leader
length minimization of crossing-free backbone labelings.  When
crossings are allowed, we aim to obtain solutions with the minimum
number of crossings. This can be achieved efficiently in the case of
fixed label order, however, in the case of  flexible label order we
show that minimizing the number of leader crossings is NP-hard.
\end{abstract}
\section{Introduction}

The process of annotating images with text in order to fully
describe specific features of interest is referred to as
\emph{labeling}. Typically, a label should not occlude features of
the image and it  should not overlap with other labels. In map
labeling, due to the small size of labels (usually a single
word/name) and our ability to control the feature density, we
usually  manage to place the labels on the map so they are in the
immediate vicinity of the feature they describe. Map labeling has
been studied in computer science for more than two
decades~\cite{FormannW1991}. Survey on algorithmic map labeling and
an extensive bibliography are given by Neyer~\cite{Neyer2001} and
Wolff and Strijk~\cite{WS96}, respectively. However, \emph{internal
labeling}  is not feasible when large labels are employed, a typical
situation in technical drawings and medical atlases. \emph{Boundary
labeling} was developed by Bekos et al.~\cite{BekosKSW2007} as a
framework and an algorithmic response to the poor quality (feature
occlusion, label overlap) of specific labeling applications. In
boundary labeling,  labels are placed at the boundary of a rectangle
and are connected to their associated features via arcs referred to
as \emph {leaders}. Leaders attach to labels at \emph{label-ports}.
A survey article  by Kaufmann~\cite{Kau09} presents the different
boundary labeling models that have been studied in the literature.

As in map labeling, most work  on boundary labeling has been devoted
to the case where each label is associated with a single feature
point. However, the case where each label is associated to more than
one feature point (the topic of this paper) is also common in
applications. We can think of groups of features sharing common
properties (e.g., identical components of technical devices or
locations of plants/animals of the same species in a map), which we
express as having the same color. Then, we need to connect via
leaders these identically colored feature points to a label of the
same color. \emph{Many-to-one boundary labeling} was formally
introduced by Lin et al.~\cite{LinKY08}. In their initial definition
of many-to-one labeling each label had one port for each connecting
feature point, i.e., each point uses an individual leader (see
Fig.~\ref{fig:mto_orig}). This inevitably lead to (i) tall labels
(ii) a wide track-routing area between the labels and the enclosing
rectangle (since leaders are not allowed to overlap) and (iii)
leader crossings in the track routing area. Lin et
al.~\cite{LinKY08} examined one and two-sided boundary labeling
using so-called $opo$-leaders~\cite{BekosKSW2007}. They showed that
several crossing minimization problems are NP-complete and,
subsequently, developed approximation and heuristic algorithms. In a
variant of this model, referred to as \emph{boundary labeling with
hyperleaders}, Lin~\cite{Lin2010} resolved the multiple port issue
by joining together all leaders attached to a common label with a
vertical line segment in the track-routing area (see
Fig.~\ref{fig:mto_hyper}). At the cost of label duplications, leader
crossings could be eliminated.

\begin{figure}[t]
    \begin{subfigure}[b]{.32\textwidth}
        \centering
        \includegraphics[width=\textwidth]{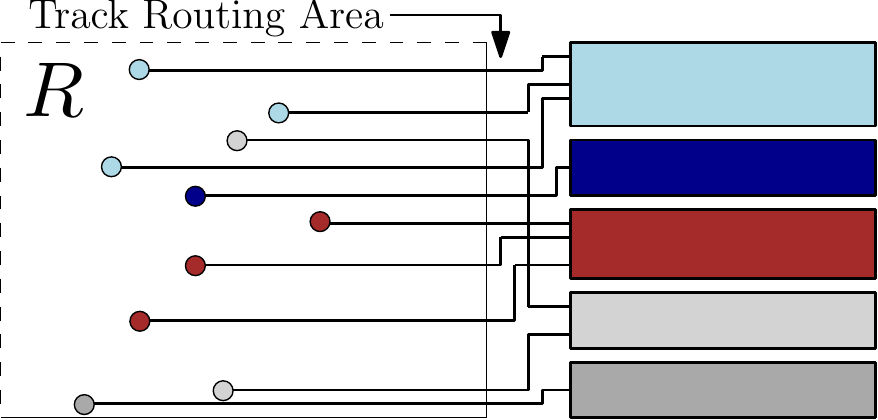}
        \caption{Individual leaders~\cite{LinKY08}}
        \label{fig:mto_orig}
    \end{subfigure}
    \hfill
    \begin{subfigure}[b]{.32\textwidth}
        \centering
        \includegraphics[width=\textwidth]{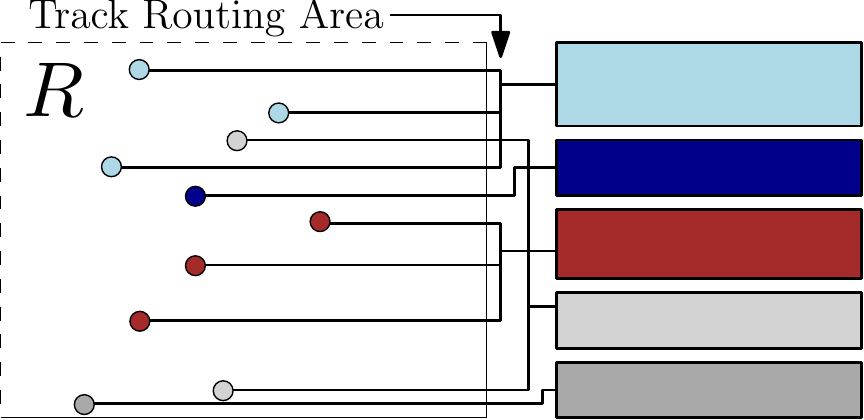}
        \caption{Hyperleaders~\cite{Lin2010}}
        \label{fig:mto_hyper}
    \end{subfigure}
    \hfill
    \begin{subfigure}[b]{.32\textwidth}
        \centering
        \includegraphics[width=\textwidth]{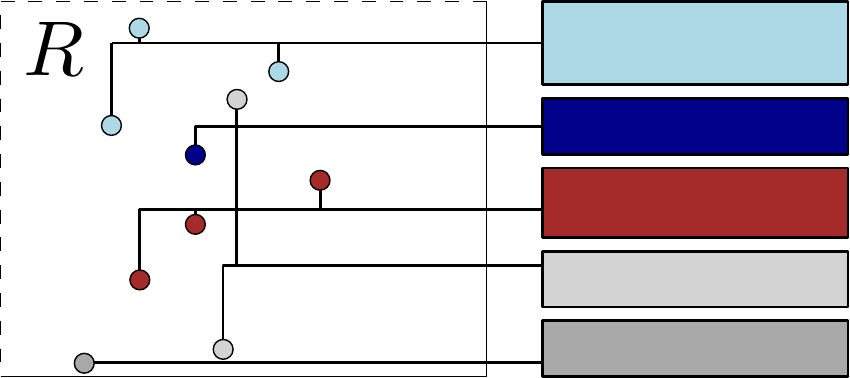}
        \caption{Backbones}
        \label{fig:mto_backbone}
    \end{subfigure}
\caption{Different types of many-to-one labelings.}
\label{fig:examples}
\end{figure}

We study \emph{many-to-one boundary labeling with backbone leaders}
(for short, \emph{backbone labeling}). In this many-to-one model, a
horizontal backbone reaches out of each label into the
feature-enclosing rectangle. Feature points that need to be
connected to a label are linked via vertical line segments to the
label's backbone (see Fig.~\ref{fig:mto_backbone}). The backbone
model does not need a  track routing area and thus overcomes several
disadvantages of previous many-to-one labeling models, in particular
the issues (ii) and (iii) mentioned above. As
Fig.~\ref{fig:examples} shows, backbone  labelings also require much
less ``ink'' in the image than the previous methods and thus is
expected to be less disturbing for the viewer. We note that backbone
labeling can be seen as a variation of Lin's $opo$-hyperleaders.
Lin~\cite{Lin2010} posed it as an open problem to study
$po$-hyperleaders (which is his terminology for backbones), in
particular to minimize the number of duplicate labels in a
crossing-free labeling.

\subsection{Our Contribution}

Our paper studies three aspects of backbone  labeling, \emph{label
number minimization} (Section~\ref{sec:number}), \emph{total leader
length minimization} (Section~\ref{sec:length}), and \emph{crossing
minimization} (Section~\ref{sec:cross}). The first two aspects
require crossing-free leaders. We consider both \emph{finite
backbones} and \emph{infinite backbones}. Finite backbones extend
horizontally from the label to the furthest point connected to the
backbone, whereas infinite backbones span the whole width of the
rectangle (thus one could use duplicate labels on both sides).
Furthermore, our algorithms vary depending on whether the order of
the labels is fixed or flexible and whether more than one label per
color class can be used.

For crossing-free backbone labeling we derive efficient algorithms
based on dynamic programming to minimize label number and total
leader length (Sections~\ref{sec:number} and~\ref{sec:length}),
which solves the open problem of Lin~\cite{Lin2010}. The main idea
is that backbones can be used to split an instance into two
independent subinstances. For infinite leaders faster algorithms are
possible since each backbone generates two independent instances;
for finite backbones the algorithms require more effort since a
backbone does not split the whole point set and thus the outermost
point connected to each backbone must be considered. If crossings
are allowed, we give an efficient algorithm for crossing
minimization with fixed label order and show NP-completeness for
flexible label order (Section~\ref{sec:cross}).

\subsection{Problem Definition}
In backbone labeling, we are given  a set $P$ of $n$ points in an
axis-aligned rectangle $R$, where each point $p\in P$ is assigned a
color $c(p)$ from a color-set $C$.  Our goal is to  place colored
labels on the boundary of $R$ and to assign each point $p \in P$ to
a distinct label $l(p)$ of color $c(p)$.

All points assigned to the same label will be connected to it
through a single backbone leader. A \emph{backbone leader} consists
of a horizontal \emph{backbone}  attached to the left or right side
of the enclosing rectangle $R$  and vertical line segments that
connect the points to the backbone.
Since the backbones are horizontal, we consider labels to be fully
described by the y-coordinate of their backbone. Note that, at first
sight, this may imply that labels are of infinitely small height.
However, by imposing a minimum separation distance between
backbones, we can also accommodate labels of finite height.

Let $\mathcal{L}$ be a set of colored labels and consider label $l
\in \mathcal{L}$. By $c(l)$, $y(l)$, and $P(l)$ we denote the color
of label $l$,
the $y$-coordinate of the backbone of label $l$ on the boundary
of $R$ and the set of points that are connected/associated to label
$l$, respectively.

A \emph{backbone (boundary) labeling} for a set of colored points
$P$  in a rectangle $R$ is a set $\mathcal{L}$ of colored labels
together with a mapping of each point $p \in P$  to some
$c(p)$-colored label  in $\mathcal{L}$. The drawing can be easily
produced  since  the backbone  leader for label $l$ is fully
specified  by  $y(l)$ and $P(l)$. A backbone  labeling is called
\emph{legal} if and only if (i) each point is connected to a label
of the same color, and (ii) there are no backbone leader overlaps
(though crossings are allowed in some cases).

Several restrictions on the number of labels of a specific color may
be imposed: The number of labels may be unlimited, effectively
allowing us to assign each point to a distinct label. Alternatively,
the number of labels  may be bounded by $K \geq |C|$. If $K=|C|$,
all points of the same color have to be assigned to a single label.
We may also restrict the maximum number of allowed labels for each
color in $C$ separately by specifying  a \emph{color vector}
$\vec{k} = (k_{1}, \ldots, k_{|C|})$. A legal backbone  labeling
that satisfies all of the imposed restrictions on the number of
labels is called \emph{feasible}. Our goal in this paper is to find
feasible backbone labelings that optimize different quality
criteria.

A backbone  labeling without leader crossings is referred to as
\emph{crossing-free}. An interesting variation of the backbone
labeling concerns the size of the backbone. A \emph{finite backbone}
attached to a label at, say, the right side of $R$ extends up to the
leftmost point that is assigned to it. An \emph{infinite backbone}
spans the whole width of $R$.  Note that, in the case of
crossing-free labelings,  infinite backbones may result in labelings
with a larger number of labels and increased total leader length.

In the rest of the paper, we denote the points of  $P$ as $\{ p_1,
p_2, \ldots, p_n\}$ and we assume that no two points share the same
x or y-coordinate. For simplicity,  we consider the points to be
sorted in decreasing order of y-coordinates, with $p_1$ being the
topmost point in all of our relevant drawings.

\section{Minimizing the Total Number of Labels}\label{sec:number}
In this section we minimize the total number of labels in a
crossing-free solution, i.e., we set $K=n$ so that there is
effectively no upper bound on the number of labels.

\subsection{Infinite Backbones.}

We start with an important observation on the structure of
crossing-free labelings with infinite backbones.
\begin{lemma}
Let $p_i, p_{i+1}$ be two points that are vertically consecutive.
Let $p_j$ ($j<i$) be the first point above $p_i$ with $\col(p_j)
\neq \col(p_i)$, and let $p_{j'}$ ($j' > i+1$) be the first point
below $p_{i+1}$ with $\col(p_{j'}) \neq \col(p_{i+1})$ if such
points exist.

In any crossing-free backbone labeling $p_i$ and $p_{i+1}$ are
vertically separated by at most 2 backbones. Furthermore, any
separating backbone has color $\col(p_i), \col(p_{i+1})$,
$\col(p_{j})$, or~$\col(p_{j'})$. \label{lemma:intermediate-colors}
\end{lemma}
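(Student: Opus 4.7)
The plan is to establish the two assertions in sequence: first the bound of at most two separating backbones, then the color restriction on each.

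For the bound, I would argue by contradiction. Suppose backbones $B_1,\dots,B_m$ lie strictly between $p_i$ and $p_{i+1}$, stacked from top to bottom with $m\geq 3$. Since no input point lies in the open strip between $p_i$ and $p_{i+1}$, every vertical connector leaving any $B_k$ must reach a point above $p_i$ or below $p_{i+1}$. For a middle backbone $B_k$ with $1<k<m$, such a connector would have to cross $B_1$ or $B_m$; because backbones are infinite and span the $x$-coordinate of every point, this forces a leader crossing. As a non-empty label must have at least one attached point, no middle backbone can exist and $m\leq 2$. The same argument shows that when $m=2$ the topmost backbone $B_1$ attaches only to points above $p_i$ and $B_2$ only to points below $p_{i+1}$; when $m=1$, $B_1$ may attach to points on either side.

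For the color restriction I would focus on the topmost separating backbone $B_1$ and prove $\col(B_1)\in\{\col(p_i),\col(p_j)\}$ whenever $B_1$ connects to at least one point above $p_i$; the symmetric case handles points below $p_{i+1}$ and yields $\{\col(p_{i+1}),\col(p_{j'})\}$. Suppose to the contrary that $\col(B_1)=c$ with $c\notin\{\col(p_i),\col(p_j)\}$. Let $q$ be the point of color $c$ attached to $B_1$ whose $y$-coordinate is lowest among those above $p_i$. Since every point strictly between $p_j$ and $p_i$ has color $\col(p_i)\neq c$ by the definition of $p_j$, and since $c\neq\col(p_j)$, the point $q$ must lie strictly above $p_j$. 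The vertical segment from $q$ down to $B_1$ at $x$-coordinate $x(q)$ then forbids any other infinite backbone at $y$-coordinate in $(y(B_1),y(q))$, because such a backbone would span $x(q)$ and cross that segment.

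The crux is the following channeling observation, which I expect to be the main technical step: any point $p_m$ whose $y$-coordinate lies in $(y(B_1),y(q))$ and whose color differs from $c$ cannot attach to $B_1$, has no backbone available in this empty interval, and cannot reach a backbone below $y(B_1)$ without crossing $B_1$; hence its backbone lies strictly above $y(q)$. To avoid crossing the stack of infinite backbones above $y(q)$, its vertical segment must terminate at the lowest such backbone $B_{\mathrm{top}}$, so $\col(p_m)=\col(B_{\mathrm{top}})$. Applying this to $p_i$ gives $\col(p_i)=\col(B_{\mathrm{top}})$, and applying it to $p_j$, which lies in this interval and satisfies $\col(p_j)\neq c$, gives $\col(p_j)=\col(B_{\mathrm{top}})=\col(p_i)$, contradicting the defining property of $p_j$. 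Once this funneling step is formalized, the color claim for $B_1$ follows and the symmetric conclusion for $B_2$ is immediate.
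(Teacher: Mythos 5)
Your proof is correct and follows essentially the same route as the paper's: the middle one of three separating backbones cannot reach any point, and for the color claim you identify the same offending point (your $q$, the paper's $p_k$) strictly above $p_j$ whose vertical segment empties the strip below it of backbones. The only cosmetic difference is the final contradiction --- you force $p_i$ and $p_j$ onto a common backbone $B_{\mathrm{top}}$ and conclude $\col(p_i)=\col(p_j)$, whereas the paper concludes that $p_i$ cannot be labeled at all; both are sound.
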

\begin{proof}
Suppose there are three separating backbones. Then the middle one
could not be connected to any point. Now, suppose a separating
backbone is connected to a point $p_{k}$ above $p_i$ and has color
$\col(p_k) \notin \{\col(p_j), \col(p_i)\}$. Then $k < j < i$. The
backbone for $p_j$ has to be above $p_k$. Point $p_i$ is lying
between two backbones of other colors; see
Fig.~\ref{fig:intermediate-colors-lemma}. Its own backbone cannot be
placed there without crossing a vertical segment connecting $p_k$ or
$p_j$ to their corresponding backbone. Symmetrically, we see that a
backbone separating $p_i$ and $p_{i+1}$ that is connected to a point
below $p_{i+1}$ can only have color $\col(p_{i+1})$ or
$\col(p_{j'})$.\qed
\end{proof}

\begin{figure}[ht]
    \centering
    \includegraphics{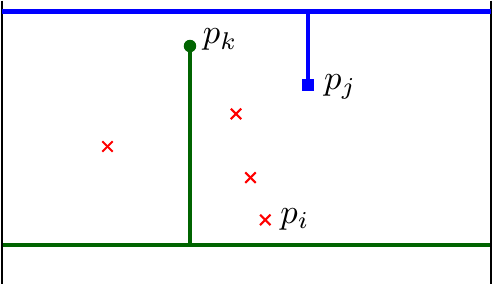}
    \caption{Point $p_i$ cannot be labeled.}
    \label{fig:intermediate-colors-lemma}
\end{figure}

Clearly, if all points have the same color, one label always
suffices. Even in an instance with two colors, one label per color
is enough: We place the backbone of one color above all points, and
the backbone of the second color below all points. However, if a
third color is involved, then many labels may be required.

We denote by $NL(P)$ the number of labels of an optimal
crossing-free solution of $P$.  In the general case of the problem,
$P$ may contain several consecutive points of the same color. We
proceed to construct a new instance $C(P)$ based on instance $P$, in
which no two consecutive points are of the same color. To do so, we
identify each maximal set of same-colored consecutive points of $P$
and we replace them by a single point of the same color that lies in
the position of the topmost point of this set. Note that in order to
achieve this, a simple top-to-bottom sweep is enough. Let
$C(P)=\{p_1',p_2',\ldots,p_k'\}$ be the \emph{clustered point set},
that we just constructed. For the sake of simplicity, we assume that
$f\colon P \rightarrow C(P)$ is a function which computes the
representative in $C(P)$ for a given a point of $P$, and, with a
slight abuse of notation let $f^{-1}\colon
P\rightarrow\mathcal{P}(C(P))$ be its ``inverse function''.

\begin{lemma}
The number of labels needed in an optimal crossing-free labeling of
$P$ with infinite backbones is equal to the number of labels needed
in an optimal crossing-free solution of $C(P)$, i.e.,
$NL(P)=NL(C(P))$. \label{lemma:clustered_point_set}
\end{lemma}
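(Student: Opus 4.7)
The plan is to prove the equality by establishing both inequalities $NL(C(P))\le NL(P)$ and $NL(P)\le NL(C(P))$ via explicit constructions that transform a labeling of one instance into a labeling of the other.

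For the direction $NL(C(P))\le NL(P)$, I would take an optimal crossing-free labeling $\mathcal L$ of $P$ and project it to a labeling $\mathcal L'$ of $C(P)$. Since every representative $p'\in C(P)$ is itself a point of $P$, it inherits the label it has in $\mathcal L$; all non-representative points are simply discarded, along with their vertical segments, and any backbones that become unused are dropped. This projection can only remove, never add, leader segments, so $\mathcal L'$ remains crossing-free and uses at most as many labels as $\mathcal L$.

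For the reverse direction $NL(P)\le NL(C(P))$, I would start from an optimal crossing-free labeling $\mathcal L'$ of $C(P)$ and lift it to a labeling $\mathcal L$ of $P$ by assigning every point $p\in P$ to the label of its representative $f(p)\in C(P)$, keeping all backbones at the same $y$-coordinates as in $\mathcal L'$. This uses exactly the same number of labels, so it remains to verify that $\mathcal L$ is crossing-free. The new vertical segments are located at $x$-coordinates of non-representative points, which are distinct from one another and from the $x$-coordinates in $C(P)$, so they never conflict among themselves or with existing vertical segments.

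The main obstacle, and the only nontrivial case, is to rule out a new vertical segment from some $q\in f^{-1}(p')\setminus\{p'\}$ to $B=B(p')$ crossing an infinite backbone $B_1$. Such a crossing forces $y(B_1)$ to lie strictly between $y(q)$ and $y(B)$; since $p'$'s segment to $B$ is already crossing-free in $\mathcal L'$, $y(B_1)$ cannot lie between $y(p')$ and $y(B)$, so (up to symmetry) $B_1$ must be one of the at most two backbones separating the consecutive representatives $p'$ and $p''$ in $C(P)$, as provided by Lemma~\ref{lemma:intermediate-colors}. I would then use the color restriction from that lemma to handle the offending backbone in one of two ways: if $c(B_1)=c(p')$, reroute $q$ directly to $B_1$ (which is legal because it is a backbone of the same color and no backbone lies strictly between $y(q)$ and $y(B_1)$ by a second application of Lemma~\ref{lemma:intermediate-colors}); otherwise, modify $\mathcal L'$ by sliding $B_1$ (together with its companion separating backbone, if any, and in order) past the cluster to a position below $y(p'')$ or above $y(p')$, which is possible without creating new crossings because Lemma~\ref{lemma:intermediate-colors} controls both the number and colors of backbones in that strip. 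Carrying out this case analysis produces a crossing-free labeling of $P$ using exactly $NL(C(P))$ labels, completing the proof.
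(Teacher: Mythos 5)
Your skeleton matches the paper's: the inequality $NL(C(P))\le NL(P)$ is immediate since $C(P)\subseteq P$, and the substantial direction is proved by lifting an optimal labeling of $C(P)$ to $P$ (connecting each point to the backbone of its representative) and repairing the crossings, which can only arise inside the horizontal strips spanned by the individual clusters. Where you diverge is the repair step, and there your argument has concrete gaps. First, in the rerouting sub-case you claim that if the offending backbone $B_1$ has color $c(p')$ then $q$ can simply be reconnected to $B_1$ because ``no backbone lies strictly between $y(q)$ and $y(B_1)$ by a second application of Lemma~\ref{lemma:intermediate-colors}.'' That lemma permits \emph{two} separating backbones between consecutive points of $C(P)$; the second one may have a color different from $c(q)$ and may lie between $y(q)$ and $y(B_1)$, and nothing you invoke excludes this, so the rerouted segment can still cross a backbone. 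Second, the sliding target is wrong: moving $B_1$ ``below $y(p'')$'' pushes it past the next representative $p''$, where its (now longer) vertical segments sweep across $p''$ and the following gap and can create new crossings there; the correct destination is just outside the cluster's strip, i.e., immediately below the \emph{bottommost point of} $f^{-1}(p')$ but still above $p''$, or immediately above $p'$. Moreover the direction of the slide is not a free choice --- it is dictated by whether $B_1$ connects to points above or below the strip (equivalently, by its position relative to the cluster's own backbone), since sliding it toward points it is not connected to lengthens its segments across the whole strip. Finally, you repair one crossing at a time and never argue that these local modifications are compatible with each other and yield a single globally crossing-free labeling.

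The paper avoids all of this with one global normalization per strip: it first observes that in a crossing-free labeling of $P$ each cluster strip can host at most one backbone serving that cluster and that the strips are pairwise disjoint; it then simultaneously pushes every other backbone lying in a strip to the strip's top or bottom boundary according to whether it lies above or below the cluster's backbone, preserving their relative order. This single operation removes all crossings at once, does not change the number of labels, and needs no color bookkeeping from Lemma~\ref{lemma:intermediate-colors}. Your use of that lemma to bound and color-classify the offending backbones is a reasonable instinct (the paper uses it that way in the dynamic program of Theorem~\ref{theorem:infinite-min-number}), but as written the two repair sub-cases do not go through.
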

\begin{proof}
Since $C(P) \subseteq P$, it trivially follows that $NL(C(P)) \leq
NL(P)$. So, in order to complete the proof it remains to show that
$NL(P) \leq NL(C(P))$. Let $S(C(P))$ be an optimal solution of
$C(P)$ with $NL(C(P))$ labels. If we manage to construct a solution
of $P$ that has exactly the same number of labels as the optimal
solution of $C(P)$, then obviously $NL(P) \leq NL(C(P))$.

Let $p_{i}'$, $i=1,2,\ldots,k$, be an arbitrary point of $C(P)$ and
let $\{p_j,p_{j+1},\ldots,p_{j+m}\}$ be the maximal set of
consecutive, same-colored points of $P$  that has $p_{i}'$ as its
representative in $C(P)$, i.e.,
$f^{-1}(p_{i}')=\{p_j,p_{j+1},\ldots,p_{j+m}\}$. Let $S(p_{i}')$ be
the horizontal strip that is defined by the two horizontal lines
through $p_j$ and $p_{j+m}$, respectively. Clearly, in a legal
solution of $P$, $S(p_{i}')$ can accommodate at most one backbone,
i.e., the one of $\{p_j,p_{j+1},\ldots,p_{j+m}\}$, as we look for
crossing-free solutions. Now, observe that
$S(p_{1}'),S(p_{2}'),\ldots,S(p_{k}')$ do not overlap with each
other, since we have assumed that our point set $P$ is in general
position, and subsequently, all maximal sets of consecutive,
same-colored points of $P$ are well separated. We proceed to derive
a first solution $S(P)$ of $P$ from $S(C(P))$ as follows: We connect
each point $p_i$ to the backbone of its representative $f(p_i)$ in
$S(C(P))$. Clearly, $S(P)$ is not necessarily crossing-free.
However, all potential crossings should appear in horizontal strips
$S(p_{1}'),S(p_{2}'),\ldots,S(p_{k}')$; otherwise $S(C(P))$ is not
crossing-free as well. Let $S(p_{i}')$, $i=1,2,\ldots,k$, be a
horizontal strip which contains crossings. As already stated,
$S(p_{i}')$ can accommodate at most one backbone, i.e., the one of
$\{p_j,p_{j+1},\ldots,p_{j+m}\}$. We proceed to move all backbones
in $S(p_{i}')$ that are above (below, resp.) the one of
$\{p_j,p_{j+1},\ldots,p_{j+m}\}$ on to the top of (below, resp.)
$S(p_{i}')$, without changing their relative order and without
influencing the strips above and below $S(p_{i}')$ (recall that
$S(p_{1}'),S(p_{2}'),\ldots,S(p_{k}')$ do not overlap with each
other, which suggests that this is always possible). From the above
it follows that the constructed solution is crossing-free and has
the same number of labels as the one of $C(P)$, which completes the
proof of this lemma.\qed
\end{proof}

We now present a linear-time algorithm for minimizing the number of
infinite backbones.

\begin{theorem}
Let $P=\{p_1,p_2,\ldots,p_n\}$ be an input point set consisting of
$n$ points sorted from top to bottom. Then, a crossing-free labeling
of $P$ with the minimum number of infinite backbones can be computed
in $O(n)$ time. \label{theorem:infinite-min-number}
\end{theorem}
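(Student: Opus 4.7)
The plan is to combine Lemmas~\ref{lemma:clustered_point_set} and~\ref{lemma:intermediate-colors} into a linear-time dynamic program.

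First, by Lemma~\ref{lemma:clustered_point_set}, it suffices to solve the problem on the clustered point set $C(P)=\{p_1',\ldots,p_k'\}$, which is built from $P$ in a single top-to-bottom sweep in $O(n)$ time. On $C(P)$, consecutive points have distinct colors by construction.

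Second, I sweep $C(P)$ top-to-bottom with a dynamic program. Any crossing-free infinite-backbone labeling is determined by a sequence $b_1,\ldots,b_m$ of horizontal backbones with colors $c_1,\ldots,c_m$ ordered by $y$-coordinate, which partitions the clustered sequence into contiguous groups $G_0,\ldots,G_m$ such that every point of $G_j$ has color in $\{c_j,c_{j+1}\}$ (with only $c_1$ allowed in $G_0$ and only $c_m$ in $G_m$). The DP state after processing $p_i'$ records the allowed-color pair of the region currently containing $p_i'$ (the top boundary color, already fixed, and the bottom boundary color, possibly still uncommitted). A transition either keeps $p_{i+1}'$ in the current region (if its color is admissible or can be committed as the second allowed color) or closes the region by inserting one or two backbones and opens a new region; by Lemma~\ref{lemma:intermediate-colors} the colors of these new backbones are drawn from $\{c(p_{i-1}'),c(p_i'),c(p_{i+1}'),c(p_{i+2}')\}$, and closing the region simultaneously satisfies all pending points of the freshly-committed bottom-boundary color. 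The minimum over terminal DP states equals $NL(C(P))=NL(P)$.

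Correctness is based on the observation that the allowed-color pair captures all information about the past that future decisions can use: every point already placed in the current region is either already connected to the top backbone (above) or will be connected to the bottom backbone of the region (below) as soon as we decide to close the region. Efficiency relies on two uses of Lemma~\ref{lemma:intermediate-colors}: per-gap transitions involve $O(1)$ backbone-color choices, and, by arguing that every color appearing in a reachable DP state must have arisen as a backbone color in a gap close to $p_i'$ in the clustered sequence, the number of reachable states per index is bounded by a constant. Combined with $k=O(n)$ indices, this yields the claimed $O(n)$ total running time. The principal obstacle will be making the ``reachable states are $O(1)$'' argument rigorous: this requires tracing how colors from the restricted four-element neighborhoods given by Lemma~\ref{lemma:intermediate-colors} propagate as the current region evolves, ruling out DP states whose boundary colors lie outside the local color context of $p_i'$.
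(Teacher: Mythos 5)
Your proposal follows essentially the same route as the paper: reduce to the clustered point set via Lemma~\ref{lemma:clustered_point_set}, then sweep top-to-bottom with a dynamic program whose state records the color of the lowest backbone placed so far together with the color of the (at most one) pending unlabeled point, using Lemma~\ref{lemma:intermediate-colors} to bound both the transition choices and the reachable states per index by a constant. The ``reachable states are $O(1)$'' step you flag as the main obstacle is handled in the paper exactly as you anticipate --- first an $O(n|C|^2)$ bound over all color pairs, then the observation that only the constantly many colors surrounding each point can occur in a useful state.
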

\begin{proof}
In order to simplify the proof, we assume that no two consecutive
points have the same color, with the aid of
Lemma~\ref{lemma:clustered_point_set}. For $i=1,2,\ldots,n$,
$\{c_{\mathrm{bak}},c_{\mathrm{free}}\} \subseteq C$, and,
$\mathrm{cur} \in \{true,false\}$, let $nl(i, \mathrm{cur},
c_{\mathrm{bak}}, c_{\mathrm{free}})$ be the optimal number of
backbones above or at $p_i$
in the case where:

\begin{itemize}
  \item The lowest backbone has color $c_{\mathrm{bak}}$.
  \item If $\mathrm{cur}$ is true, the lowest backbone coincides with $p_i$.
  Hence, it is $c(p_i)$-colored, i.e., $c_{\mathrm{bak}} = c(p_i)$;
  otherwise the lowest backbone is above $p_i$.
  Note that in the latter case $p_i$ might be unlabeled (e.g., the
  color of the lowest backbone is not $c(p_i)$, or equivalently, $c_{\mathrm{bak}} \neq c(p_i)$).
  \item The point that, by Lemma~\ref{lemma:intermediate-colors}, may
  exist between $p_i$ and the lowest backbone has color $c_{\mathrm{free}}$.
  Obviously, in the case where $\mathrm{cur}=\mathrm{true}$ (i.e., the
  lowest backbone
  coincides with $p_i$) this point does not exist.
  So, in general, if this point does not exists, we assume
  that $c_{\mathrm{free}}=\emptyset$.
\end{itemize}

Obviously, $nl(1, true, c(p_1), \emptyset)=1$ and $nl(1, false,
\emptyset, c(p_1))=0$. Now assume that we have computed all entries
of table $nl$ that correspond to different labelings induced by
point $p_i$. In order to compute the corresponding table entries for
the next point $p_{i+1}$, we distinguish two cases:

\begin{figure}[t]
    \centering
    \fbox{\begin{subfigure}[b]{.14\textwidth}
        \centering
        \includegraphics[width=\linewidth]{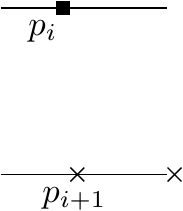}
        \caption{}
        \label{fig:1}
    \end{subfigure}}
    \hfill
    \fbox{\begin{subfigure}[b]{.14\textwidth}
        \centering
        \includegraphics[width=\linewidth]{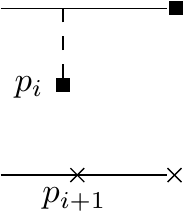}
        \caption{}
        \label{fig:2}
    \end{subfigure}}
    \hfill
    \fbox{\begin{subfigure}[b]{.14\textwidth}
        \centering
        \includegraphics[width=\linewidth]{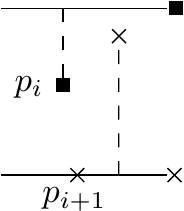}
        \caption{}
        \label{fig:3}
    \end{subfigure}}
    \hfill
    \fbox{\begin{subfigure}[b]{.14\textwidth}
        \centering
        \includegraphics[width=\linewidth]{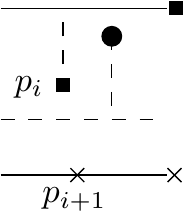}
        \caption{}
        \label{fig:4}
    \end{subfigure}}
    \hfill
    \fbox{\begin{subfigure}[b]{.14\textwidth}
        \centering
        \includegraphics[width=\linewidth]{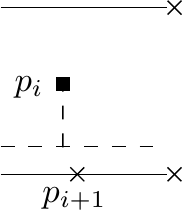}
        \caption{}
        \label{fig:5}
    \end{subfigure}}
    \hfill
    \fbox{\begin{subfigure}[b]{.14\textwidth}
        \centering
        \includegraphics[width=\linewidth]{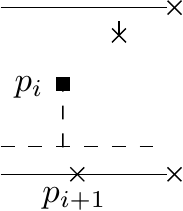}
        \caption{}
        \label{fig:6}
    \end{subfigure}}
    \hfill
    \fbox{\begin{subfigure}[b]{.14\textwidth}
        \centering
        \includegraphics[width=\linewidth]{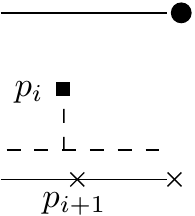}
        \caption{}
        \label{fig:7}
    \end{subfigure}}
    \hfill
    \fbox{\begin{subfigure}[b]{.14\textwidth}
        \centering
        \includegraphics[width=\linewidth]{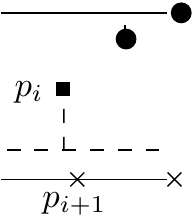}
        \caption{}
        \label{fig:8}
    \end{subfigure}}
    \hfill
    \fbox{\begin{subfigure}[b]{.14\textwidth}
        \centering
        \includegraphics[width=\linewidth]{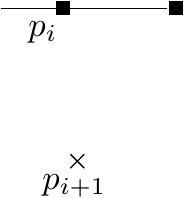}
        \caption{}
        \label{fig:9}
    \end{subfigure}}
    \hfill
    \fbox{\begin{subfigure}[b]{.14\textwidth}
        \centering
        \includegraphics[width=\linewidth]{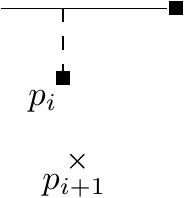}
        \caption{}
        \label{fig:10}
    \end{subfigure}}
    \hfill
    \fbox{\begin{subfigure}[b]{.14\textwidth}
        \centering
        \includegraphics[width=\linewidth]{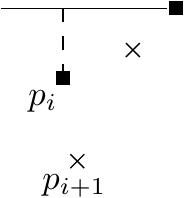}
        \caption{}
        \label{fig:11}
    \end{subfigure}}
    \hfill
    \fbox{\begin{subfigure}[b]{.14\textwidth}
        \centering
        \includegraphics[width=\linewidth]{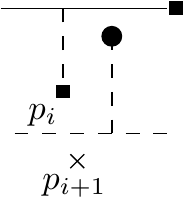}
        \caption{}
        \label{fig:12}
    \end{subfigure}}
    \hfill
    \fbox{\begin{subfigure}[b]{.14\textwidth}
        \centering
        \includegraphics[width=\linewidth]{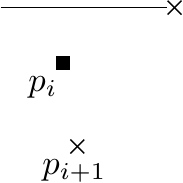}
        \caption{}
        \label{fig:13}
    \end{subfigure}}
    \hfill
    \fbox{\begin{subfigure}[b]{.14\textwidth}
        \centering
        \includegraphics[width=\linewidth]{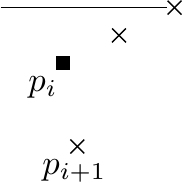}
        \caption{}
        \label{fig:14}
    \end{subfigure}}
    \hfill
    \fbox{\begin{subfigure}[b]{.14\textwidth}
        \centering
        \includegraphics[width=\linewidth]{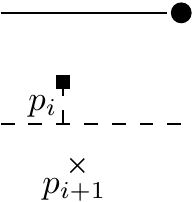}
        \caption{}
        \label{fig:15}
    \end{subfigure}}
    \hfill
    \fbox{\begin{subfigure}[b]{.14\textwidth}
        \centering
        \includegraphics[width=\linewidth]{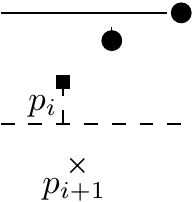}
        \caption{}
        \label{fig:16}
    \end{subfigure}}
    \hfill
    \fbox{\begin{subfigure}[b]{.14\textwidth}
        \centering
        \includegraphics[width=\linewidth]{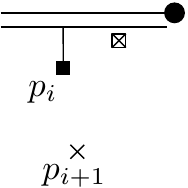}
        \caption{}
        \label{fig:17}
    \end{subfigure}}
    \hfill
    \fbox{\begin{subfigure}[b]{.14\textwidth}
        \centering
        \includegraphics[width=\linewidth]{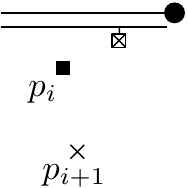}
        \caption{}
        \label{fig:18}
    \end{subfigure}}
    \caption{Different configurations that arise in proof of Theorem~\ref{theorem:infinite-min-number}.}
    \label{fig:cases}
\end{figure}

\begin{enumerate}
\item \label{case:on} \emph{The lowest backbone coincides with $p_{i+1}$:}
In this case, the lowest backbone should be $c(p_{i+1})$-colored,
$\mathrm{cur}=true$ and obviously there is no unlabeled point
between the backbone through $p_{i+1}$ and $p_{i+1}$, i.e.,
$c_{\mathrm{free}}=\emptyset$. Hence, we need to compute entry
$nl({i+1}, true, c(p_{i+1}), \emptyset)$. To do so, we distinguish
the following subcases with respect to the color of the lowest
backbone, say $bb$, above or at point $p_i$:

\begin{enumerate}[\ref{case:on}.1]
\item \emph{$bb$ is above or at point $p_i$ and $c(p_i)$-colored}.
If $bb$ is at point $p_i$ (see Fig.~\ref{fig:1}), then trivially
there is no unlabeled point below it. Hence, a feasible solution can
be derived from $nl(i, true, c(p_i), \emptyset)$ by adding a new
backbone, i.e., the one incident to $p_{i+1}$. If $bb$ is above
point $p_i$, then we distinguish two subcases. (a)~If there is no
unlabeled point below it (see Fig.~\ref{fig:2}), a feasible solution
can be derived from $nl(i, false, c(p_i), \emptyset)$ again by
adding a new backbone, i.e., the one incident to $p_{i+1}$. (b)~On
the other hand, if there is an unlabeled point below $bb$, then
again we need to distinguish two subcases. (b.1) If the unlabeled
point is colored $c(p_{i+1})$ (see Fig.~\ref{fig:3}), then a single
backbone, i.e., the one incident to $p_{i+1}$ is enough. The
corresponding solution is derived from $nl(i, false, c(p_i),
c(p_{i+1}))$. (b.2) However, in the case where the unlabeled point
is colored $c$ and $c \notin \{c(p_{i}),c(p_{i+1})\}$ (see
Fig.~\ref{fig:4}), two backbones are required and the corresponding
feasible solution is derived from $nl(i, false, c(p_i), c)$, $c
\notin \{c(p_{i}),c(p_{i+1})\}$. Note that the case where the
unlabeled point below $bb$ is of color $c(p_i)$ cannot occur, since
we have assumed that consecutive points are not of the same color.

\item \emph{$bb$ is above $p_i$ and $c(p_{i+1})$-colored}.
Again, we distinguish two subcases. (a)~If there is no unlabeled
point below it (see Fig.~\ref{fig:5}), then a feasible solution can
be derived from $nl(i, false, c(p_{i+1}), \emptyset)$ by adding two
new backbones, i.e., the one incident to $p_i$ and the one incident
to $p_{i+1}$. (b)~If there is an unlabeled point below it (see
Fig.~\ref{fig:6}), then its color should be $c(p_{i+1})$. If this is
not the case, it is easy to see that the backbone above $p_i$ is not
$c(p_{i+1})$-colored. Again two backbones are required, i.e., the
one incident to $p_i$ and the one incident to $p_{i+1}$. The
corresponding solution is derived from $nl(i, false, c(p_{i+1}),
c(p_{i+1}))$.

\item \emph{$bb$ is above $p_i$ and $c$-colored, where $c \neq c(p_i)$ and $c
\neq c(p_{i+1})$}. In this case, either there is no unlabeled point
below $bb$ (see Fig.~\ref{fig:7}) or there is one which is
$c$-colored (see Fig.~\ref{fig:8}). In both cases, two backbones
have to be placed; one incident to $p_i$ and one incident to
$p_{i+1}$. In the former case, the corresponding feasible solution
is derived from $nl(i, false, c, \emptyset)$, $c \notin
\{c(p_{i}),c(p_{i+1})\}$, while in the latter it is derived from
$nl(i, false, c, c)$, $c \notin \{c(p_{i}),c(p_{i+1})\}$.
\end{enumerate}

From the above cases, it follows:

\begin{displaymath}
nl({i+1}, true, c(p_{i+1}), \emptyset) = \min \left\{
\begin{array}{lr}
nl(i, true, c(p_i), \emptyset) + 1 \\
nl(i, false, c(p_i), \emptyset) + 1 \\
nl(i, false, c(p_i), c(p_{i+1})) + 1\\
nl(i, false, c(p_i), c) + 2, c\notin \{c(p_{i}),c(p_{i+1})\}\\
nl(i, false, c(p_{i+1}), \emptyset) + 2 \\
nl(i, false, c(p_{i+1}), c(p_{i+1})) + 2 \\
nl(i, false, c, \emptyset)+2, c \notin \{c(p_{i}),c(p_{i+1})\} \\
nl(i, false, c, c)+2, c \notin \{c(p_{i}),c(p_{i+1})\} \\
\end{array}
\right.
\end{displaymath}

\item \label{case:above} \emph{The lowest backbone is above $p_{i+1}$:}
Again, we distinguish subcases with respect to the color of the
lowest backbone $bb$ above or at point $p_i$:

\begin{enumerate}[\ref{case:above}.1]
\item \emph{$bb$ is above or at point $p_i$ and $c(p_i)$-colored}.
If $bb$ is at point $p_i$ (see Fig.~\ref{fig:9}) or $bb$ is above
point $p_i$ and either there is no unlabeled point below it (see
Fig.~\ref{fig:10}) or the unlabeled point below it is colored
$c(p_{i+1})$ (see Fig.~\ref{fig:11}), then no backbone is required.
Then, the corresponding feasible solutions are as follows:
$$nl({i+1}, true, c(p_i), \emptyset) = \min\{nl(i, true, c(p_i), \emptyset), nl(i, false, c(p_i), \emptyset)\}$$
$$nl({i+1}, true, c(p_i), c(p_{i+1})) = nl(i, false, c(p_i), c(p_{i+1}))$$
However, in the case where the unlabeled point below $bb$ is
$c$-colored, where $c \neq c(p_i)$ and $c \neq c(p_{i+1})$, then a
backbone is required (see Fig.~\ref{fig:12}). Hence, the
corresponding feasible solution can be derived as follows:
\[nl({i+1}, true, c(p_i), c) = nl(i, false, c(p_i),
c)+1,~c\notin\{c(p_i), c(p_{i+1})\}\]

\item \emph{$bb$ is above $p_i$ and $c(p_{i+1})$-colored}.
In this case, either there is no unlabeled point below it (see
Fig.~\ref{fig:13}) or there is one which is $c(p_{i+1})$-colored
(see Fig.~\ref{fig:14}). In both cases no backbone is required.
Hence, the corresponding feasible solutions can be derived as
follows:
$$nl({i+1}, false, c(p_{i+1}), \emptyset) = nl(i, false, c(p_{i+1}), \emptyset)$$
$$nl({i+1}, false, c(p_{i+1}), c(p_{i+1})) = nl(i, false, c(p_{i+1}), c(p_{i+1}))$$

\item \emph{$bb$ is above $p_i$ and $c$-colored, where $c \neq c(p_i)$ and $c
\neq c(p_{i+1})$}. In this case, if there is no unlabeled point
below it (see Fig.~\ref{fig:15}) or there is one which is
$c$-colored (see Fig.~\ref{fig:16}), then one backbone is required
for $p_i$. The corresponding feasible solution can be derived as
follows:
\begin{align*}
  nl({i+1}, false, c(p_i), \emptyset) = \min&\{nl(i, false, c,
\emptyset),\\ &nl(i, false, c, c)\}+1,~c\notin\{c(p_i),c(p_{i+1})\}
\end{align*}
The most interesting case of our case analysis arises when a forth
color is involved, say $c'\notin\{c(p_i),c(p_{i+1}),c\}$. In this
case, either the $c'$-colored point remains unlabeled and $p_i$ is
labeled (see Fig.~\ref{fig:17}), or, the $c'$-colored point is
labeled and $p_i$ remains unlabeled (see Fig.~\ref{fig:18}). The
corresponding feasible solutions can be described as follows.
\begin{align*}
  nl(&{i+1}, false, c(p_i), c')\\ &= nl(i, false, c,
  c')+1,~c\notin\{c(p_i),c(p_{i+1})\}, c' \notin\{c(p_i),c(p_{i+1}),c\}
\end{align*}
\begin{align*}
  nl(&{i+1}, false, c', c(p_i))\\ &= nl(i, false, c,
  c')+1,~c\notin\{c(p_i),c(p_{i+1})\}, c' \notin\{c(p_i),c(p_{i+1}),c\}
\end{align*}

\end{enumerate}
\end{enumerate}
Having computed table $nl$, the number of labels of the optimal
solution of $P$ equals to the minimum entry of the form
$nl(n,false,\cdot,\emptyset)$. Since the algorithm maintains an $n
\times 2 \times |C| \times |C|$ table and each entry is computed in
constant time, the time complexity of our algorithm is $O(n|C|^2)$.
However, with the aid of Lemma~\ref{lemma:clustered_point_set}, it
can be reduced to $O(n)$ (since there is a constant number of colors
that ``surround'' each point).  A corresponding solution can be
found by backtracking in the dynamic program. \qed
\end{proof}

\subsection{Infinite Backbones.}
\label{sec:finite-backbones-label-min}
We now consider finite backbones. First, note that we can always
slightly shift the backbones in a given solution so that backbones
are placed only in gaps between points. We number the gaps from $0$
to $n$ where gap $0$ is above and gap $n$ is below all the points,
respectively.

Suppose a point $p_l$ lies between a backbone of color $c$ in gap
$g$ and a backbone of color $c'$ in gap $g'$ with $0\le g < l \le g'
\le n$ such that both backbones horizontally extend to at least the
x-coordinate of $p_l$; see Fig.~\ref{fig:finite-backbones-dp}.
Suppose all points except the ones in the rectangle $R(g,g',l)$,
spanned by the gaps $g$ and $g'$ and limited by $p_l$ to the left
and by the boundary to the right, are already labeled. An optimum
solution for connecting the points in $R$ cannot reuse any backbone
except for the two backbones in gaps $g$ and $g'$; hence, it is
independent of the rest of the solution.

We use this observation for solving the problem by a dynamic
program. For $0 \le g \le g' \le n$, $l \in \left\{ g, \ldots ,g'
\right\} \cup \left\{ \emptyset \right\}$, and two colors $c$ and
$c'$ let $T[g,c,g',c',l]$ be the minimum number of additional labels
that are needed for labeling all points in the rectangle $R(g,g',l)$
under the assumption that there is a backbone of color $c$ in gap
$g$, a backbone of color $c'$ in gap $g'$, between these two
backbones there is no backbone placed yet, and they both extend to
the left  of $p_l$. Note that for $l=\emptyset$ the rectangle is
empty and $T[g,c,g',c',\emptyset] = 0$.

\begin{figure}[htb]
    \centering
    \includegraphics[scale=0.94]{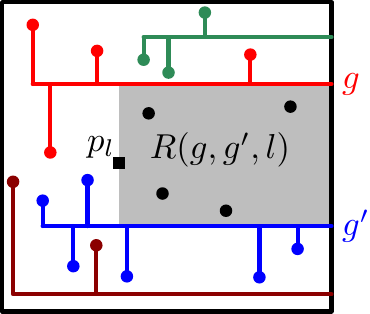}
    \caption{A partial instance bounded by two backbones and the leftmost point $p_l$.}
    \label{fig:finite-backbones-dp}
\end{figure}

We distinguish cases based on the connection of $p_l$. First, if
$c(p_l)=c$ or $c(p_l) = c'$, it is always optimal to connect $p_l$
to the top or bottom backbone, respectively, as all remaining points
will be to the right of the new vertical segment. Hence, in this
case, $T[g,c,g',c',l] = T[g,c,g',c',\leftp(g,g',l)]$ where
$\leftp(g,g',l)$ is the index of the leftmost point in the interior
of $R(g,g',l)$ or $\leftp(g,g',l) = \emptyset$ if no such point
exists.

Otherwise suppose $c(p_l) \notin \{c,c'\}$. For connecting
$p_l$ we need to place a new backbone of color $c(p_l)$, which is
possible in any gap $\tilde{g}$ with $g\le\tilde{g}\le g'$. The
backbone splits the instance into two parts, between gaps $g$ and
$\tilde{g}$ and between gaps $\tilde{g}$ and $g'$. Hence, we obtain
the recursion $T[g,c,g',c',l] = 1+ \min_{g \le \tilde{g} \le g'}
\left(
   T[g,c,\tilde{g},c(p_l),\leftp(g,\tilde{g},l)]\right.
  +$ $\left. T[\tilde{g},
c(p_l), g',c', \leftp(\tilde{g},g',l)] \right)
$.

Finally, let $\bar{c} \notin C$  be a dummy color, and let
$p_{\bar{l}}$ be the leftmost point. Then the value
$T[0,\bar{c},n,\bar{c},\bar{l}]$ is the minimum number of labels
needed for labeling all points. We can compute each of the $(n+1)
\times |C| \times (n+1) \times |C| \times (n+1)$ entries of table
$T$ in $O(n)$ time. All $\leftp(\cdot,\cdot,\cdot)$-values can
easily be precomputed  in $O(n^3)$ total time.

\begin{theorem}
Given a set $P$ of $n$ colored points and a color set $C$, we can
compute a feasible labeling of $P$ with the minimum number of finite
backbones in $O(n^4 |C|^2)$ time.
\label{thm:finite-label-min}
\end{theorem}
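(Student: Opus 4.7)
The plan is to formalize the dynamic program already sketched in the discussion preceding the theorem and verify its correctness and its running time. First, I would observe that one may restrict attention to backbones that lie in the $n+1$ horizontal gaps between consecutive points (including the gap above $p_1$ and the gap below $p_n$): any feasible solution can be vertically perturbed so that each backbone sits strictly between two consecutive points' $y$-coordinates without creating crossings or changing its set of connected points. This justifies indexing subproblems by gap indices $g,g'\in\{0,\dots,n\}$.

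Next, I would prove the structural claim that legitimises the recurrence. Fix a subinstance described by $T[g,c,g',c',l]$, i.e.\ the rectangle $R(g,g',l)$ bounded above by a $c$-coloured backbone in gap $g$, below by a $c'$-coloured backbone in gap $g'$, to the left by the vertical line through $p_l$, and to the right by the right side of $R$, with both framing backbones reaching at least the $x$-coordinate of $p_l$. The central observation is that the optimal labelling of the points in $R(g,g',l)$ is independent of the rest of the labelling: any backbone used to label a point strictly inside $R(g,g',l)$ must be placed in some gap $\tilde g\in\{g,\dots,g'\}$, and because backbones are finite it does not extend beyond the leftmost point it serves, hence cannot interact with points outside $R(g,g',l)$. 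This is the one step I expect to require care, because one has to also argue that reusing the framing backbones for points inside $R(g,g',l)$ never needs to be combined with reuse for points outside.

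Given this independence, the recurrence follows by a case analysis on how the leftmost interior point $p_l$ is connected. If $\col(p_l)\in\{c,c'\}$, connecting $p_l$ via a vertical segment to the matching framing backbone costs no new label and does not split $R(g,g',l)$, giving $T[g,c,g',c',l]=T[g,c,g',c',\leftp(g,g',l)]$. Otherwise a fresh backbone of colour $\col(p_l)$ must be introduced in some gap $\tilde g\in\{g,\dots,g'\}$; since it must reach $p_l$, it partitions $R(g,g',l)$ into two independent subinstances $R(g,\tilde g,\leftp(g,\tilde g,l))$ and $R(\tilde g,g',\leftp(\tilde g,g',l))$, yielding the displayed recurrence with a ``$+1$'' and a minimisation over $\tilde g$. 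The base case is $T[g,c,g',c',\emptyset]=0$.

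Finally, I would introduce a dummy colour $\bar c\notin C$ and a virtual leftmost point $p_{\bar l}$ so that $T[0,\bar c,n,\bar c,\bar l]$ correctly forces the outer backbones to be created as needed, and read off the answer from this entry. For the running time, the table has $O(n^3|C|^2)$ entries, each computed in $O(n)$ time via the minimisation over $\tilde g$, and all $\leftp(\cdot,\cdot,\cdot)$ values can be precomputed in $O(n^3)$ time by simple sweeps; this totals $O(n^4|C|^2)$, matching the claimed bound. An optimum labelling is recovered by standard backtracking through the DP.
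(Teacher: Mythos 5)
Your proposal is correct and follows essentially the same route as the paper: the same gap discretization, the same table $T[g,c,g',c',l]$ with the same independence argument for the subrectangle $R(g,g',l)$, the same case split on whether $\col(p_l)\in\{c,c'\}$, and the same $O(n^3|C|^2)\times O(n)$ running-time accounting. The point you flag as needing care (that the framing backbones already extend left of $p_l$, so connecting interior points to them cannot interact with the outside) is exactly the invariant the paper builds into the definition of $T$.
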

\section{Length Minimization}\label{sec:length}
In this section we minimize the total length of all leaders in a
crossing-free solution, either including or excluding the horizontal
lengths of the backbones. We distinguish between a global bound $K$
on the number of labels or a color vector $\vec{k}$ of individual
bounds per color.

\subsection{Infinite Backbones.}
\label{sec:infinite-lenth-minimization}
We use a parameter $\lambda$ to distinguish the two minimization
goals, i.e., we set $\lambda = 0$, if we want to minimize only the
sum of the length of all vertical segments and we set $\lambda$ to
be the width of the rectangle $R$ if we also take the length of the
backbones into account.  In this section, we assume that
$p_1>\dots>p_n$ are the y-coordinates of the input points.

\paragraph{Single Color.}
If all points have the same color, we have to choose a set $S$ of at
most $K$ y-coordinates where we draw the backbones and connect each
point to its nearest backbone.  Hence, we must solve the following
problem: Given $n$ points with y-coordinates $p_1>\dots>p_n$, find a
set $S$ of at most $K$ y-coordinates that minimizes
\begin{equation}
   \lambda\cdot|S| + \sum_{i=1}^n \min_{y \in S} |y - p_i|.
\label{EQ:k-median}
\end{equation}
Note that we can optimize the value in Eq.~(\ref{EQ:k-median}) by
choosing $S \subseteq \{p_1,\dots,p_n\}$:
For $y \in S \setminus
\{p_1,\dots,p_n\}$ let $\{p_i,\dots,p_j\}$ be the set of points that
we would connect to the backbone through $y$. Let $p_i > \dots >
p_{i'} > y > p_{i'+1} >\dots > p_j$. If $i' - i + 1 \geq j - i'$, replace
$y$ be $p_{i'}$. Otherwise replace $y$ be $ p_{i'+1}$.  Then the
objective value in Eq.~\ref{EQ:k-median} can at most improve.
Hence, the problem can be solved in $O(Kn)$ time if the points are
sorted according to their y-coordinates using the algorithm of
Hassin and Tamir~\cite{Hassin1991395}. Note that the problem
corresponds to the $K$-median problem if $\lambda = 0$.

\paragraph{Multiple Colors.}
If the $n$ points have different colors, we can no longer assume
that all backbones go through one of the given $n$ points. However,
by Lemma~\ref{lemma:intermediate-colors}, it suffices to add between
any pair of vertically consecutive points two additional candidates
for backbone positions, plus one additional candidate above all
points and one below all points.  Hence, we have a set of $3n$
candidate lines at y-coordinates
\begin{equation}
   p_1^- > p_1 > p_1^+ > p_2^- > p_2 > p_2^+ > \dots > p_n^- > p_n > p_n^+
   \label{EQ:candidate_lines}
\end{equation}
\begin{figure}
  \centerline{\includegraphics{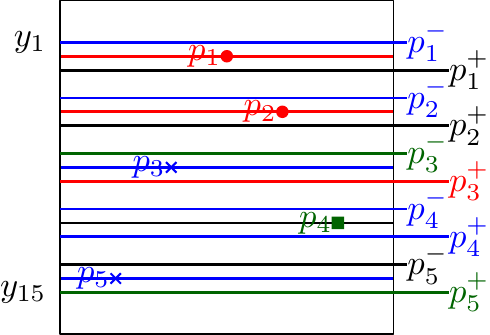}}
\caption{\label{FIG:candidates}Candidates for five points. Red points are circles, blue points are crosses, and the green point is a square. Candidates through a point have the same color as the point. Candidate $p_i^\pm$ has the same color as the first point with a different color as $p_i$ that is met when walking from $p_i^\pm$ over $p_i$. Candidates $p_1^+$, $p_2^+$, and $p_5^-$ will not be used and have no color.}
\end{figure}
where for each $i$ the values $p_i^-$ and $p_i^+$ are as close to
$p_i$ as the label heights allow. Clearly, a backbone through $p_i$
can only be connected to points with color $c(p_i)$.
If we use a backbone through $p_i^-$ (or $p_i^+$, respectively), it
will have the same color as the first point below $p_i$ (or above
$p_i$, respectively) that has a different color than $p_i$.
E.g., in Fig~\ref{FIG:candidates}, $p_1^-$ is colored blue, since $p_3$ is the first point below
$p_1$ that has a different color than red, namely blue.
Hence, the colors of all candidates are fixed or the candidate will
never be used as a backbone.
For an easier notation, we denote the $i$th point in
Eq.~\ref{EQ:candidate_lines} by $y_i$ and its color by $c(y_i)$.

We solve the problem using dynamic programming. For each
$i=1,\dots,3n$, and for each vector $\vec k' =
(k'_1,\dots,k'_{|C|})$ with $k'_1 \leq k_1,\dots,k'_{|C|} \leq
k_{|C|}$, let $L(i,\vec k')$ denote the minimum length of a feasible
backbone labeling of $p_1,\dots,p_{\lfloor\frac{i+1}{3}\rfloor}$
using $k'_c$ infinite backbones of color $c$ for $c=1,\dots,|C|$
such that the bottommost backbone is at position $y_i$, if such a
labeling exists. Otherwise $L(i,\vec k') = \infty$. In the
following, we describe, how the values $L(i,\vec k')$ can be
computed recursively in $\mathcal O(n^2\prod_{i=1}^{|C|} k_i)$ time
in total.

\begin{lemma}
All values $L(i,\vec k')$ can be computed in $\mathcal
O(n^2\prod_{i=1}^{|C|} k_i)$ time in total.
\end{lemma}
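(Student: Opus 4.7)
I plan to design a dynamic program whose transitions pass from a predecessor backbone position $y_j$ to the current position $y_i$, and then to verify that both the state space and the per-transition work have the claimed size. The key step is to encode the interaction of two consecutive backbones. Consider the configuration counted by $L(i, \vec k')$ and let $y_j$ with $j<i$ be the backbone immediately above the bottommost one at $y_i$. Every input point $p_k$ with $y_i < p_k < y_j$ lies in the strip between these two backbones and, since no third backbone is placed there, must connect to either $y_j$ or $y_i$; if any such $p_k$ has a color outside $\{c(y_j), c(y_i)\}$ the configuration is infeasible, and otherwise the strip contributes
\[
  d(j,i) \;=\; \sum_{y_i < p_k < y_j}\; \min_{\ell \in \{j,i\},\ c(y_\ell)=c(p_k)} |p_k - y_\ell|.
\]
This yields the recursion
\[
  L(i, \vec k') \;=\; \min_{j<i}\; \bigl( L\bigl(j,\, \vec k' - e_{c(y_i)}\bigr) + \lambda + d(j,i) \bigr),
\]
where $e_{c(y_i)}$ is the unit vector for color $c(y_i)$. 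I would introduce dummy candidate positions $y_0$ above and $y_{3n+1}$ below all points, each carrying a fictitious color matched by no input point, so that $L(3n+1, \vec k)$ is the desired answer and the boundary strips above the topmost and below the bottommost real backbone are folded uniformly into the $d(\cdot,\cdot)$ terms.

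Next, I would justify that restricting backbones to the $3n$ candidate y-coordinates preserves optimality. Between two vertically consecutive input points, Lemma~\ref{lemma:intermediate-colors} forces every backbone in the strip to be of one of four specific colors, determined by the two neighboring points and their nearest differently-colored neighbors. A backbone in that strip can serve input points lying strictly on one side of the strip only, so shifting it toward the top or bottom edge of the strip, i.e., onto one of the candidate positions $p_i^+$ or $p_{i+1}^-$, or onto $p_i$ or $p_{i+1}$ themselves, never lengthens any vertical leader nor creates a crossing. Thus any optimal labeling can be transformed to one whose backbones all lie on the $3n$ candidate positions without increasing the objective.

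Finally, I would count work. There are $3n \cdot \prod_{c=1}^{|C|}(k_c+1) = O(n\prod_c k_c)$ DP states, and each transition performs an $O(n)$ minimization over $j$. The values $d(j,i)$ are precomputed for all $O(n^2)$ pairs in $O(n^2)$ total time by fixing $j$, scanning $i$ downward from $j+1$, and maintaining running sums of $(y_j - p_k)$ and $(p_k - y_i)$ together with a flag recording whether an incompatible color has been encountered. This gives the claimed $O\bigl(n^2 \prod_{i=1}^{|C|}k_i\bigr)$ total time. The main obstacle, I expect, is the lossless-restriction argument of the previous paragraph, since Lemma~\ref{lemma:intermediate-colors} is stated for feasibility rather than length; the shift argument must be carried out one backbone at a time to rule out cascading interactions between the two potential backbones that may share a single strip.
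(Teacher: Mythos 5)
Your proposal is correct and follows essentially the same route as the paper: the same recursion over pairs of consecutive candidate backbones with an inter-backbone strip cost (the paper's $\textsc{link}(j,i)$ equals your $d(j,i)$), and the same $O(n^2)$ sweep with running sums and a moving split index to evaluate all strip costs. The only differences are cosmetic --- you fix the upper candidate and scan the lower one rather than the reverse, and you use sentinel candidates in place of the paper's explicit base cases and final minimization.
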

\begin{proof}
Assume that
we want to place a new backbone at $y_i$ and that the previous
backbone was at $y_j,j < i$. Then we have to connect the points
$p_x, {(j+2)}/{3} \leq x \leq {i}/{3}$ between $y_i$ and $y_j$ to
one of the backbones through $y_i$ or $y_j$. Let
$\textsc{link}(j,i)$ denote the minimum total length of the vertical
segments linking these points to their respective backbone. If there
is a point $p_x$ between $y_i$ and $y_j$ with $c(p_x) \notin
\{c(y_i),c(y_j)\}$, we set $\textsc{link}(j,i)=\infty$. Otherwise,
we have
\begin{equation}
   \textsc{link}(j,i) =
     \begin{cases}
       \displaystyle\sum_{\frac{j+2}{3} \leq x \leq \frac{i}{3}}\min\{y_j - y_x,y_x - y_i\} &
                           \textup{if } c(y_i) = c(y_j)\\
        \displaystyle\sum_{\frac{j+2}{3} \leq x \leq \frac{i}{3} \atop c(p_x) = c(y_j)}(y_j - y_x) +
        \displaystyle\sum_{\frac{j+2}{3} \leq x \leq \frac{i}{3} \atop c(p_x) = c(y_i)}(y_x - y_i) &
                           \textup{if } c(y_i) \neq c(y_j)\\
     \end{cases}
\label{EQ:link}
\end{equation}

With the base cases $L(i,0,\dots,0,k'_i = 1,0,\dots,0) = \sum_{0 < x
\leq i/3} (y_i - p_x)$ if all points above $y_i$ have the same color
as $y_i$ and
$\infty$ otherwise, as well as $L(i,\vec{k'}) = \infty$ if $k'_i =
0$, we get the following recursion

\begin{equation}
    L(i,k'_1,\dots,k'_{|C|}) = \lambda
    + \min_{j \leq i}
    \left(
        L(j,k'_1,\dots,k'_{c(y_i)} - 1,\dots,k'_{|C|})
       + \textsc{link}(j,i)
    \right).
\label{EQ:multi-color-length}
\end{equation}

It remains to show that for a fixed index $i$, all values
\textsc{link}$(j,i)$, $j<i$ can be computed in $O(n)$ time. Let $c'$
be the first color of a point above $y_i$ that is different from
$c(y_i)$. For a fixed $i$, starting from $j=i-1$, we scan the
candidates in decreasing order of their indices until we find the
first point that is neither colored $c'$ nor $c(y_i)$.  For each
$j$, let $P'_i(j) = \{p_x;\;\frac{j+2}{3} \leq x \leq \frac{i}{3},
c(p_x) = c'\}$, $P_i(j) = \{p_x;\;\frac{j+2}{3} \leq x \leq
\frac{i}{3}, c(p_x) = c(y_i)\}$, $P_i^\text{up}(j) =
\{p_x;\;\frac{j+2}{3} \leq x \leq \frac{i}{3}, c(p_x) = c(y_i),
y_j-p_x \leq p_x-y_i\}$.  In each step, we can update in constant
time the values %
$|P'_i(j)|$, %
$\textsc{secondLength}(j) = \sum_{p\in P'_i(j)} (y_j - p) =$
$\textsc{secondLength}(j-1) + |P'_i(j)|\cdot (y_j - y_{j-1})$, and
$\textsc{firstLength}(j) = \sum_{p\in P_i(j)} (p - y_i)$. %
Further, we can update in amortized constant time the index of the
lowest element in $P_i^\text{up}(j)$ as well as the values
$|P_i^\text{up}(j)|$, %
$\textsc{firstDownLength}(j) = \sum_{p\in P_i(j)\setminus
  P_i^\text{up}(j) } (p - y_i)$, and also
$\textsc{firstUpLength}(j)= \sum_{p\in P_i^\text{up}(j) } (y_j -
p)$. %
With these values we can directly compute \textsc{link}$(j,i)$
according to the case distinction in Eq.~\ref{EQ:link}. Hence, we
can compute a value $L(i,k'_1,\dots,k'_{|C|})$ according to the
recursion in Eq.~\ref{EQ:multi-color-length} in $O(n)$ time.
Therefore, all values $L(i,\vec k')$ can be computed in $\mathcal
O(n^2\prod_{i=1}^{|C|} k_i)$ time in total.\qed
\end{proof}

Let $S$ be the set of candidates $y_i$ such that all points below
$y_i$ have the same color as $y_i$. Then we can compute the minimum
total length of a backbone  labeling of $p_1,\dots,p_n$ with at most
$k_c,c=1,\dots,|C|$ labels per color $c$ as
\[
  \min_{y_i \in S\cup\{p^{+}_n\},k'_1\leq k_1,\dots,k'_{|C|}\leq
k_{|C|}}\left(L(i,k'_1,\dots,k'_{|C|}) + \sum_{\frac{i+2}{3} \leq x \leq n} (y_i - p_x) \right).
\]

If we globally bound the total number of labels by $K$, we obtain a
similar dynamic program with the corresponding values $L(i,k)$,
$i=1,\dots,3n$, $k<K$. We summarize the results in the following
theorem.
\begin{theorem}
A minimum length backbone labeling with infinite backbones for $n$
points with $|C|$ colors can be computed in $O(n^2 \cdot
\prod_{i=1}^{|C|} k_i)$ time if at most $k_i$ labels are allowed for
color $i$, $i=1,\dots,|C|$ and in $O(n^2 \cdot K)$ time if in total
at most $K$ labels are allowed.
\end{theorem}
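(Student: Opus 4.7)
The plan is to combine the dynamic programming table computed in the preceding lemma with a short aggregation step, and then to adapt the recursion to the variant in which only the total label count $K$ is bounded. Most of the work is already done: the $O(n^2\prod_{i=1}^{|C|}k_i)$ cost of filling the table of values $L(i,\vec k')$ has just been established, so what remains is to (a) read the optimum off the table, and (b) repeat the argument with a scalar label count in place of a color-count vector.

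For (a), every crossing-free labeling has a unique bottommost backbone, which by Lemma~\ref{lemma:intermediate-colors} lies at one of the $3n$ candidate positions $y_i$. All points strictly below $y_i$ must attach to that backbone, which is feasible only if every such point has color $c(y_i)$; this is precisely the condition defining the set $S$ in the excerpt. Consequently the optimum equals
\[
\min_{y_i\in S\cup\{p_n^+\},\;\vec k'\le\vec k}\Bigl(L(i,\vec k')+\sum_{(i+2)/3\le x\le n}(y_i-p_x)\Bigr),
\]
where the trailing sum is the cost of linking the tail points to the bottommost backbone. These tail sums are easily precomputed in $O(n)$ total time by a single bottom-up sweep, and the minimum ranges over $O(n\prod_{i=1}^{|C|}k_i)$ triples, each evaluated in constant time. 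Thus the aggregation step is dominated by the table computation, yielding the first runtime bound.

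For (b), I would replace $\vec k'$ by a single integer $k'\le K$ counting all labels placed so far. The recursion becomes
\[
L(i,k')=\lambda+\min_{j<i}\bigl(L(j,k'-1)+\textsc{link}(j,i)\bigr),
\]
with the same base cases, and with identical \textsc{link} values, since these depend only on the colors of candidates and input points, not on how labels are counted. The table now has $3nK=O(nK)$ entries, each computable in amortized $O(n)$ time by the same scan used in the lemma, and the final aggregation is analogous. This yields the promised $O(n^2K)$ bound.

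The only subtle point, already absorbed into the preceding lemma, is the amortized $O(n)$ maintenance of $\textsc{link}(j,i)$ as $j$ decreases for a fixed $i$: the scan terminates at the first $j$ that introduces a third color into the strip, and each auxiliary quantity (\textsc{firstLength}, \textsc{firstUpLength}, \textsc{firstDownLength}, \textsc{secondLength}) updates in constant time thanks to the monotone y-order of the candidates. Given this invariant, no additional obstacle arises, and both runtime bounds follow directly from the respective table sizes multiplied by the per-entry cost.
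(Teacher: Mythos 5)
Your proposal matches the paper's own argument essentially step for step: the theorem is obtained by combining the lemma's $O(n^2\prod_{i=1}^{|C|}k_i)$ table computation with the final aggregation over bottommost candidates in $S\cup\{p_n^+\}$ (adding the tail linking cost), and by collapsing the color vector to a single counter $k'\le K$ for the globally bounded variant. Both runtime bounds follow exactly as you describe, so there is no gap and no substantive difference from the paper's route.
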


\subsection{Finite Backbones.}
We modify the dynamic program used for minimizing the total number
of labels (in Section~\ref{sec:finite-backbones-label-min}) for
minimizing the total leader length. First, we now denote by the
$T$-values the additional length of segments and backbones needed
for labeling the points of the subinstance. We do, however, have to
apply more changes. By the case of a single point connected to a
backbone, we see that we have to allow backbones passing through
input points of the same color. Additionally, for computing the
vertical length needed for connecting to a backbone placed in a gap,
we need to know its actual $y$-coordinate.

Suppose there is a set $B$ of backbones that all lie in the same gap
between points $p_i$ and $p_{i+1}$.  Let $b^\star$ be the longest of
those; see Fig.~\ref{fig:cost-finite-backbones-split}. $b^\star$
vertically splits the set $B$; any backbone $b' \in B$ above
$b^\star$ connects only to points above it and any backbone $b'' \in
B$ below $b^\star$ connects only to points below it. By moving $b'$
to the top and $b''$ to the bottom as far as possible the total
leader length decreases. Hence, in any optimum solution, the
backbones above $b^\star$ will be very close to $y(p_i)$ and the
backbones below $b^\star$ will be very close to $y(p_{i+1})$.
Furthermore, depending on the numbers of connected points above and
below, by either moving $b^\star$ to the top or to the bottom we
will find a solution that is not worse, and in which any backbone of
$B$ is close to $p_i$ or $p_{i+1}$.

If we allow backbones to be infinitely close to points or other
backbones, we can %
use backbone positions $p_{i}^-$ and $p_{i}^+$ that lie infinitely
close above and below $p_{i}$, respectively, and share its
y-coordinate. Each of these positions may be used for an arbitrary
number of backbones. With these positions as well as the input
points as possible label positions, we can then find a solution with
minimum total leader length in $O(n^4 |C|^2)$ time if the number of
labels is not bounded by adding the length of the newly placed
segments in any calculation.

\paragraph{Bounded numbers of labels.} If we want to integrate an
upper bound $K$ on the total number of labels, or, for each color $c
\in C$, an upper bound $k_c$ on the number of labels of color $c$,
into the dynamic program, we need an additional dimension for the
remaining number of backbones that we can use in the subinstance (or
a dimension for each color $c \in C$ for the remaining number of
backbones of that color). Additionally, when splitting the instance
into two parts, we have to consider not only the position of the
splitting backbone of color $c$, but also the different combinations
of distributing the allowed number(s) of backbones among the
subinstances. For a global bound $K$, we need $O(nK)$ time for
computing an entry of the table. If we have individual bounds $k_c$
for $c\in C$, we need $O(n \prod_{c \in C} k_c)$ time. Together with
the additional dimension(s) of the table, we can minimize the total
leader length in $O(n^4|C|^2K^2)$ time if we have a global bound
$K$, and in $O\left(n^4 |C|^2 \left( \prod_{c \in C} k_c \right)^2
\right)$ time if we have an individual bound $k_c$ for each color $c
\in C$.

\paragraph{Minimum distances.}
So far, we allowed backbones to be infinitely close to unconnected
points and other backbones, which will, in practice, lead to
overlaps. One would rather enforce a small distance between two
backbones or a backbone and a point, even if this increases the
total leader length a bit. Let $\Delta>0$ be the minimum allowed
distance. In an optimum solution, there will be two sequences of
backbones on the top and on the bottom of a gap between $p_i$ and
$p_{i+1}$, such that inside a sequence consecutive backbones have
distance $\Delta$; see
Fig.~\ref{fig:cost-finite-backbones-min-dist}. We get all possible
backbone positions inside the gap by taking all y-coordinates inside
whose y-distance to either $p_i$ or $p_{i+1}$ is an integer multiple
of $\Delta$; see Fig.~\ref{fig:cost-finite-backbones-candidates}.
Note that $n$ positions of each type suffice in a gap; if the gap is
too small, there might even be less positions. The two sequences can
overlap. In this case, we have to check that we do not combine two
positions with a distance smaller than $\Delta$ in the dynamic
program.

Together with the input points, we get a set of $O(n^2)$ candidate
positions for backbones, each of which can be used at most once.
This increases the number of entries of table $T$ by a factor of
$O(n^2)$, and the running time of computing a single entry by a
factor of $O(n)$. The resulting running time of our dynamic program
is $O(n^7|C|^2)$ if we do not bound the number of labels,
$O(n^7|C|^2K^2)$ if we have a global bound $K$ on the number of
labels, and $O(n^7 |C|^2 ( \prod_{c \in C} k_c )^2)$ if we have an
individual bound $k_c$ for each color $c \in C$.

\begin{figure}[tb]
  \centering
  \begin{subfigure}[t]{.32\textwidth}
    \centering
    \includegraphics[page=2]{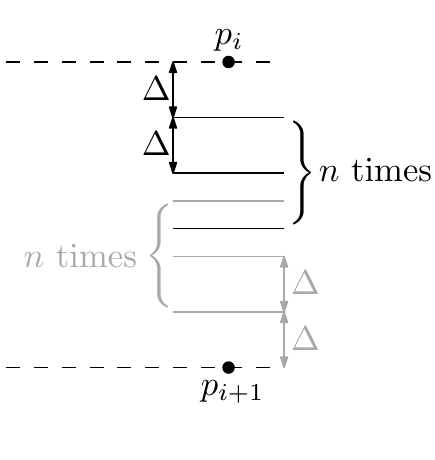}
    \caption{}
    \label{fig:cost-finite-backbones-split}
  \end{subfigure}
  \hfill
  \begin{subfigure}[t]{.32\textwidth}
    \includegraphics[page=3]{images/cost-finite-backbones}
    \caption{}
    \label{fig:cost-finite-backbones-min-dist}
  \end{subfigure}
  \hfill
  \begin{subfigure}[t]{.32\textwidth}
    \includegraphics[page=1]{images/cost-finite-backbones}
    \caption{}
    \label{fig:cost-finite-backbones-candidates}
  \end{subfigure}
  \caption{(a)~Longest backbone $b^\star$ splitting the backbones between $p_i$ and $p_{i+1}$. (b)~Backbones placed with the minimum leader length. (c)~Candidate positions for backbones inside the gap.}
\end{figure}

\begin{theorem}
Given a set $P$ of $n$ colored points, a color set $C$, and a label
bound $K$ (or color vector $\vec k$), we can compute a feasible
labeling of $P$ with finite backbones that minimizes the total
leader length in time $O(n^7|C|^2K^2)$ (or $O(n^7 |C|^2 ( \prod_{c
\in C} k_c )^2)$). \label{thm:lengthfinite}
\end{theorem}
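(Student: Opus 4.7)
The plan is to adapt the dynamic program of Theorem~\ref{thm:finite-label-min}, reinterpreting each table entry $T[g,c,g',c',l]$ as the minimum additional total leader length (instead of the number of labels) needed to label the points of the subinstance $R(g,g',l)$ bounded by two backbones of colors $c$ and $c'$ in gaps $g$ and $g'$, both extending at least to the $x$-coordinate of $p_l$. The recursion retains the same shape: either $c(p_l) \in \{c,c'\}$, in which case we route $p_l$ to the bounding backbone of matching color, paying only the vertical segment; or we introduce a new backbone of color $c(p_l)$ in some gap $\tilde g$ with $g \le \tilde g \le g'$, paying its horizontal length plus the vertical connection to $p_l$ and recursing on the two smaller rectangles $R(g,\tilde g, \leftp(g,\tilde g, l))$ and $R(\tilde g, g', \leftp(\tilde g, g', l))$.

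The first obstacle is that several backbones may share the same gap, and their actual $y$-coordinates now affect the vertical segment lengths. I will resolve this with the structural argument sketched in the discussion: in any gap between $p_i$ and $p_{i+1}$, the longest backbone $b^\star$ vertically partitions the remaining backbones of that gap; each other backbone connects only to points on one side of $b^\star$ and can therefore be pushed flush against $p_i$ from below or $p_{i+1}$ from above without increasing the total leader length, and a simple case analysis on point counts shows that $b^\star$ itself may be slid to one of the two gap boundaries. Hence, in the setting without a minimum distance, it suffices to consider candidate $y$-positions $p_i^-$ and $p_i^+$ infinitesimally above and below each input point (plus the points themselves, to allow backbones through points of matching color). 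Each such candidate may host arbitrarily many backbones, so the table size and per-entry work remain $O(n^4|C|^2)$ as before.

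Integrating the label bounds is mechanical: for a global bound $K$ we add one extra coordinate recording the remaining usable labels; for a color vector $\vec k$ we add one coordinate per color. At the splitting step, the bound must be distributed between the two recursive subinstances, contributing an extra factor of $O(K)$ or $O(\prod_c k_c)$ per entry, while the additional table dimension contributes another factor of the same size. Together with the $O(n)$ choice of $\tilde g$, this gives the running times $O(n^4|C|^2K^2)$ and $O(n^4|C|^2(\prod_c k_c)^2)$ in the setting without a minimum separation.

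Finally, to enforce a minimum distance $\Delta>0$ the flush-to-boundary argument breaks, since backbones can no longer be arbitrarily close. The main remaining difficulty is to restrict the candidate set per gap to polynomially many positions. I will argue that in any optimum solution the backbones inside a gap form two monotone sequences, anchored at $p_i$ from below and at $p_{i+1}$ from above, with consecutive members at vertical distance exactly $\Delta$; consequently all candidate $y$-positions lie on the grid of integer multiples of $\Delta$ from $p_i$ or $p_{i+1}$, giving at most $O(n)$ candidates per gap, $O(n^2)$ in total, each usable at most once. The only subtlety is that positions from the top and bottom sequences in the same gap might conflict when they are closer than $\Delta$; this is a local condition checkable in constant time during the split. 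Enlarging the table by an $O(n^2)$ factor for the refined backbone positions and the per-entry split by an $O(n)$ factor yields the claimed running times $O(n^7|C|^2K^2)$ and $O(n^7|C|^2(\prod_{c \in C} k_c)^2)$, as stated.
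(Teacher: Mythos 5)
Your proposal is correct and follows essentially the same route as the paper: reinterpret the label-minimization table as additional leader length, use the longest-backbone $b^\star$ argument to justify the candidate positions $p_i^-$, $p_i^+$ (plus input points), add dimensions and a distribution step for the label bounds, and handle the minimum distance $\Delta$ via the two $\Delta$-spaced sequences anchored at the gap boundaries, giving $O(n^2)$ single-use candidates and the stated $O(n^7|C|^2K^2)$ and $O(n^7|C|^2(\prod_{c\in C}k_c)^2)$ bounds.
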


\section{Crossing Minimization}\label{sec:cross}
In this section we allow crossings between backbone leaders, which
generally allows us to use fewer labels. We concentrate on
minimizing the number of crossings for the case $K=|C|$, i.e., one
label per color, and distinguish fixed and flexible label orders.

\subsection{Fixed y-Order of Labels}

In this part we assume that the color set $C$ is ordered and we
require that for each pair of colors $i<j$  the label (and backbone)
in color $i$ is above the label in color $j$.

\subsubsection{Infinite Backbones.}
\label{sec:crossing-min-fixed}
\label{sec:infinite-crossing-min}
Observe that it is always possible to slightly shift the backbones
of a solution without increasing the number of crossings such that
no backbone contains a point. Thus, the backbones can be assumed to
be positioned in the gaps between vertically adjacent points; we
number the gaps from $0$ to $n$ as in Section~\ref{sec:number}.

Suppose that we fix the position of the $i$-th backbone to gap~$g$.
For $1 \leq i \leq |C|$ and $0 \leq g \leq n$, let $\cross(i,g)$ be
the number of crossings of the vertical segments of the
non-$i$-colored points when the color-$i$ backbone is placed at gap
$g$. Note that this number depends only on the y-ordering of the
backbones, which is fixed, and not on their actual positions. So, we
can pre-compute table $\cross$, using dynamic programming, as
follows. All table entries of the form $\cross(\cdot,0)$ can be
clearly computed in $O(n)$ time. Then,
$\cross(i,g)=\cross(i,g-1)+1$, if the point between gaps $g-1$ and
$g$ has color $j$ and $j>i$. In the case where the point between
gaps $g-1$ and $g$ has color $j$ and $j<i$,
$\cross(i,g)=\cross(i,g-1)-1$. If it has color $i$, then
$\cross(i,g)=\cross(i,g-1)$. From the above, it follows that the
computation of table $\cross$ takes $O(n|C|)$ time.

Now, we use another dynamic program to compute the minimum number of
crossings.  Let~$T[i,g]$ denote the minimum number of crossings on
the backbones~$1,\dots, i$ in any solution subject to the condition
that the backbones are placed in the given ordering and backbone~$i$
is positioned in gap~$g$.  Clearly~$T[0,g] = 0$ for~$g=0,\dots, n$.
Moreover, we have~$T[i,g] = \min_{g' \le g} T[i-1,g'] +
\cross(i,g)$. Having pre-computed table
$\cross$ and assuming that for each entry $T[i,g]$, we also
store the smallest entry of row $T[i,\cdot]$ to the left of $g$, each entry
of table~$T$ can by computed in constant time. Hence, table~$T$ can
be filled in time~$O(n|C|)$. Then, the minimum crossing number
is~$\min_g T[|C|,g]$.  A corresponding solution can be found by
backtracking in the dynamic program.

\begin{theorem}
Given a set $P$ of $n$ colored points and  an ordered color set $C$,
a backbone labeling with one label per color, labels in the given
color order,  infinite backbones, and minimum number of crossings
can be computed in $O(n |C|)$ time.
\label{theorem:min-crossings-fixed-order}
\end{theorem}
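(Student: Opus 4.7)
The plan is to formalize the algorithm sketched immediately before the theorem, verifying each step. First, I would argue that we may restrict attention to solutions in which every backbone lies strictly between two vertically consecutive points, i.e., in one of the $n+1$ gaps: an infinitesimal vertical perturbation of any backbone neither creates nor destroys a crossing, because crossings are determined only by the combinatorial top-to-bottom order of backbones and point/backbone incidences. Since the color order of backbones is fixed, a solution is thus encoded by a weakly monotone assignment $g_1 \le g_2 \le \dots \le g_{|C|}$ of gap indices to colors (two backbones may share a gap without crossing, as they are parallel horizontal segments sorted by color).

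Next I would establish the meaning of $\cross(i,g)$. A vertical segment of color $j \ne i$ crosses the color-$i$ backbone placed in gap $g$ iff the corresponding point lies on the opposite side of $g$ from the color-$j$ backbone. Since the ordering of backbones is fixed by the color order, this criterion reduces to: a point of color $j > i$ above gap $g$ contributes one crossing, and a point of color $j < i$ below gap $g$ contributes one. This local characterization directly justifies the base case $\cross(\cdot,0)$ in $O(n)$ time and the $\pm 1$ update when $g$ is incremented past a point, giving the stated $O(n|C|)$ preprocessing.

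Then I would set up the main recurrence. Define $T[i,g]$ as the minimum, over placements of backbones $1,\dots,i$ with backbone $i$ in gap $g$ and the gap indices weakly increasing, of the total number of crossings incident to these $i$ backbones. Because every crossing involves exactly one backbone and one vertical segment of a different color, charging the crossing to that unique backbone gives a disjoint accounting; hence the crossings incident to backbone $i$ depend only on $g_i$ (captured by $\cross(i,g)$) and are independent of the earlier choices, provided $g_{i-1} \le g$. This yields
\[
    T[i,g] \;=\; \cross(i,g) \;+\; \min_{g' \le g} T[i-1,g'],
\]
with $T[0,g]=0$. Maintaining, alongside row $i-1$, a running prefix-minimum of $T[i-1,\cdot]$ allows each of the $O(n|C|)$ entries to be computed in $O(1)$, so the whole table is filled in $O(n|C|)$. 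The optimum is $\min_g T[|C|,g]$ and an actual labeling is recovered by standard backtracking, which runs in $O(n+|C|)$.

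The main subtlety, which I would spell out carefully, is justifying the additive charging scheme: one must check that no crossing is double-counted and that the monotonicity constraint $g_{i-1}\le g_i$ does not cut off any optimal solution (which follows from the fixed color order, since a crossing between two backbones is impossible when their horizontal positions are disjoint in $y$). Everything else is bookkeeping, and the two $O(n|C|)$ phases combine to the claimed $O(n|C|)$ total running time.
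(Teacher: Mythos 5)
Your proposal is correct and follows essentially the same route as the paper: reduce to gap positions, precompute $\cross(i,g)$ by the same $\pm 1$ sweep, and run the prefix-minimum dynamic program $T[i,g]=\cross(i,g)+\min_{g'\le g}T[i-1,g']$. The only difference is that you spell out the justification for the additive charging of crossings to backbones and the independence of $\cross(i,g)$ from the other backbones' exact positions, which the paper leaves implicit.
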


\subsubsection{Finite Backbones.}
We can easily modify the approach used for infinite backbones to
minimize the number of crossings for finite backbones, if the
y-order of labels is fixed, as the following theorem shows.

\begin{theorem}
Given a set $P$ of $n$ colored points and an ordered color set $C$,
a backbone labeling with one label per color, labels in the given
order, finite backbones, and minimum number of crossings can be
computed in $O(n |C|)$ time.
\label{theorem:min-crossings-fixed-order-finite-backbones}
\end{theorem}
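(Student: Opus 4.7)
The plan is to adapt the dynamic program from Theorem~\ref{theorem:min-crossings-fixed-order} by enriching the definition of $\cross(i,g)$ to respect the horizontal extent of each finite backbone. Since, for any color $i$, the backbone of color $i$ attaches to the right boundary of $R$ and extends leftward only to the leftmost point of color $i$, its left endpoint, which I denote by $x_i^{\min}$, is a function of the color $i$ alone and does not depend on where the backbone is placed vertically. A vertical segment of a point $p_j$ with $\col(p_j) \neq i$ can therefore cross the color-$i$ backbone only if the $x$-coordinate of $p_j$ is at least $x_i^{\min}$.

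First, I would precompute $x_i^{\min}$ for every color $i \in C$ in $O(n)$ total time by a single sweep through the points. I would then redefine $\cross(i,g)$ to count only those crossings involving points $p_j$ with $\col(p_j) \neq i$ and whose $x$-coordinate is at least $x_i^{\min}$. The key structural argument underlying the infinite case carries over unchanged: because the color order is fixed, the $y$-order of the $|C|$ backbones is fixed as well, and so for every non-$i$-colored point $p_j$ we know a priori, from $\col(p_j)$ alone, whether its own backbone lies above or below backbone $i$. Consequently, the crossing condition reduces, as in the infinite case, to $p_j$ lying on the appropriate side of gap $g$, and the total contribution of backbone $i$ to the number of crossings is independent of the gap choices for the other backbones.

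With this refined $\cross$, the incremental update as $g$ grows by one is essentially the same as in the infinite case, with the single additional rule that the point swept over contributes nothing when its $x$-coordinate is smaller than $x_i^{\min}$. This keeps the precomputation of the full table within $O(n)$ time per color, hence $O(n|C|)$ overall. The dynamic-programming recurrence $T[i,g] = \min_{g' \le g} T[i-1,g'] + \cross(i,g)$, together with the running-minimum speedup from the infinite-backbone proof, then fills table $T$ in $O(n|C|)$ time, and the optimum crossing number is $\min_g T[|C|,g]$; a concrete labeling is recovered by standard backtracking.

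The main point requiring verification, and the only nontrivial step I anticipate, is that the crossings involving backbone $i$ still decouple from the positions of the other backbones once backbones are finite. This reduces to two observations: the horizontal extent of every backbone is determined by its color alone, and the fixed color order determines, for each point, on which side of backbone $i$ its own backbone lies. Both facts were implicit in the infinite-backbone analysis and neither is affected by finitizing the backbones, so the $O(n|C|)$ bound claimed in the theorem follows.
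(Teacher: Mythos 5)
Your proposal is correct and matches the paper's proof: the paper likewise keeps the DP of Theorem~\ref{theorem:min-crossings-fixed-order} and modifies only the $\cross(i,g)$ update so that a point contributes $\pm 1$ only if it lies to the right of the leftmost $i$-colored point, which is exactly your $x_i^{\min}$ condition. Your explicit justification that the horizontal extent depends on the color alone (one label per color) and that the decoupling across backbones survives finitization is a welcome elaboration of what the paper leaves implicit, but it is the same argument.
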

\begin{proof}
We present a dynamic program similar to the one presented in the
proof of Theorem~\ref{theorem:min-crossings-fixed-order}. Recall
that all points of the same color are routed to the same label and
the order of the labels is fixed, i.e., the label of the $i$-th
colored points is above the label of the $j$-th colored points, when
$i<j$. Hence, the computation of the number of crossings when fixing
a backbone at a certain position should take into consideration that
the backbones are not of infinite length. Recall that the dynamic
program could precompute these crossings, by maintaining an $n
\times |C|$ table $\cross$, in which each entry $\cross(i,g)$
corresponds to the number of crossings of the non-$i$-colored points
when the color-$i$-backbone is placed at gap $g$, for $1 \leq i \leq
|C|$ and $0 \leq g \leq n$. In our case,
$\cross(i,g)=\cross(i,g-1)+1$, if the point between gaps $g-1$ and
$g$ right of the leftmost $i$-colored point and has color $j$ s.t.
$j>i$.  In the case, where the point between gaps $g-1$ and $g$ is right of the leftmost $i$-colored point and has color $j$ and $j<i$,
$\cross(i,g)=\cross(i,g-1)-1$. Otherwise,
$\cross(i,g)=\cross(i,g-1)$. Again, all table entries of the form
$\cross(\cdot,0)$ can be clearly computed in $O(n)$ time.\qed
\end{proof}

\subsection{Flexible y-Order of Labels}
In this part the order of labels is no longer given and we need to
minimize the number of crossings over all label orders. While there
is an efficient algorithm for infinite backbones, the problem is
NP-complete for finite backbones.

\subsubsection{Infinite Backbones.}
We give an efficient algorithm for the case that there are $K=|C|$
fixed label positions $y_1, \dots, y_K$ on the right boundary of
$R$, e.g., uniformly distributed.

\begin{theorem}
Given a set $P$ of $n$ colored points, a color set $C$, and a set of
$|C|$ fixed label positions, we can compute in $O(n + {|C|}^3)$ time
a feasible backbone labeling with infinite backbones that minimizes
the number of crossings.
\end{theorem}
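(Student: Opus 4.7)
The plan is to reduce the problem to a classical linear assignment problem. The key observation is that, because the $|C|$ label positions $y_1 > \dots > y_{|C|}$ are fixed before the assignment is chosen, the multiset of backbone heights is determined in advance; only the correspondence $\pi : C \to \{1,\dots,|C|\}$ between colors and heights is free. The vertical leader from a point $p$ of color $c$ runs from $p$ to the unique backbone at height $y_{\pi(c)}$, and it crosses precisely those other backbones whose heights lie strictly between $y_p$ and $y_{\pi(c)}$; two vertical segments themselves never cross since all $x$-coordinates are distinct. Therefore the number of crossings produced by $p$ is a function $f_p(\pi(c))$ of $\pi(c)$ alone, independent of how the remaining colors are placed.

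Summing over all points yields a total crossing count of $\sum_{c \in C} g_c(\pi(c))$, where $g_c(j) = \sum_{p \in P,\, c(p)=c} f_p(j)$. Minimizing this quantity over bijections $\pi$ is exactly the linear assignment problem on the $|C| \times |C|$ cost matrix $(g_c(j))_{c,j}$, which the Hungarian algorithm solves in $O(|C|^3)$ time. It remains to construct this matrix within the claimed budget. I merge the sorted point set with the sorted label positions in $O(n+|C|)$ time to obtain, for every color $c$ and every strip index $i \in \{0,\dots,|C|\}$ (the strips being delimited by the $y_j$), the number $n_{c,i}$ of color-$c$ points in strip $i$. A point in strip $i$ connected to label position $j$ contributes $i-j$ crossings if $j \le i$ and $j-i-1$ crossings if $j > i$, so $g_c(j) = \sum_{i=0}^{j-1} n_{c,i}(j-i-1) + \sum_{i=j}^{|C|} n_{c,i}(i-j)$. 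Using prefix sums of the sequences $n_{c,i}$ and $i\cdot n_{c,i}$, each row of the cost matrix is filled in $O(|C|)$ time, hence the whole matrix in $O(|C|^2)$ time.

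The only subtle step is establishing the per-color decomposition: that the crossings caused by color-$c$ leaders depend solely on $\pi(c)$ and not on how the other colors are placed. This holds because the set of heights at which any backbone sits is fixed as soon as the label positions are fixed, so counting crossings between a single vertical segment and the remaining (infinite) backbones is a purely geometric count over these fixed heights. Once this independence is observed, the reduction to an assignment problem is immediate, the three complexity terms $O(n+|C|)$, $O(|C|^2)$, and $O(|C|^3)$ add to the announced $O(n+|C|^3)$ bound, and an optimal assignment is recovered directly from the Hungarian algorithm.
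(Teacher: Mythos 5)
Your proposal matches the paper's proof: both rest on the observation that with all $|C|$ fixed positions occupied by infinite backbones, the crossings generated by the color-$c$ leaders depend only on where color $c$ is placed, which reduces the problem to a $|C|\times|C|$ linear assignment instance solved by the Hungarian method in $O(|C|^3)$ time after an $O(n+|C|^2)$ construction of the cost matrix. Your strip-counting/prefix-sum derivation of the costs is just a more explicit version of the paper's top-to-bottom sweep, so the argument is correct and essentially identical.
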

\begin{proof}
First observe that if the backbone of color $k, 1 \leq k \leq |C|$
is placed at position $y_i, 1 \leq i \leq |C|$, then the number of
crossings created by the vertical segments leading to this backbone
is fixed, since all label positions will be occupied by an infinite
backbone. This crossing number $\mathrm{cr}(k,i)$ can be determined
in $O(n_k + |C|)$ time, where $n_k$ is the number of points of color
$k$.
In fact, by a sweep from top to bottom, we can even determine all
crossing numbers $\mathrm{cr}(k,\cdot)$ for backbone $k, 1 \leq k
\leq {|C|}$ in time $O(n_k + {|C|})$. Now, we construct an instance
of a weighted bipartite matching problem, where for each position
$y_i, 1 \leq k \leq {|C|}$ and each backbone $k, 1 \leq k \leq
{|C|}$, we establish an edge $\{k,i\}$ of weight $cr(k,i)$. In
total, this takes $O(n + {|C|}^2)$ time. The minimum-cost weighted
bipartite matching problem can be computed in time $O({|C|}^3)$ with
the Hungarian method~\cite{k-hmap-55} and yields a backbone labeling
with the minimal number of crossings.\qed \end{proof}

\subsubsection{Finite Backbones.}
Next, we consider the variant with finite backbones and prove that
it is NP-hard to minimize the number of crossings. For simplicity,
we allow points that share the same x- or y-coordinates.  This can
be remedied by a slight perturbation.  Our arguments do not make use
of this special situation, and hence carry over to the perturbed
constructions. We first introduce a number of gadgets that are
required for our proof and sketch their properties, before
describing the  hardness reduction.

The construction consists of the middle backbone, whose position
will be restricted to a given range~$R$, and an upper and a lower
\emph{guard gadget} that ensure that positioning the middle backbone
outside range~$R$ creates many crossings.  We assume that the middle
backbone is connected to at least one point further to the left such
that it extends beyond all points of the guard gadgets.  The middle
backbone is connected to two \emph{range points} whose y-coordinates
are the upper and lower boundary of the range~$R$.
Their~x-coordinates are such that they are on the right of the
points of the guard gadgets.  A \emph{guard} consists of a backbone
that connects to a set of~$M$ points, where~$M > 1$ is an arbitrary
number.  The~$M$ points of a guard lie left of the range points. The
upper guard points are horizontally aligned and lie slightly below
the upper bound of range~$R$.  The lower guard points are placed
such that they are slightly above the lower bound of range~$R$.  We
place~$M$ upper and $M$ lower guards such that the guards form pairs
for which the guard points overlap horizontally. The upper (resp.\
lower) guard gadget is formed by the set of upper (resp.\ lower)
guards.  We call~$M$ the \emph{size} of the guard gadgets. The next
lemma shows the properties of the range restrictor.

\begin{figure}[tb]
  \centering
  \begin{subfigure}[b]{.45\textwidth}
    \centering
    \includegraphics[page=1]{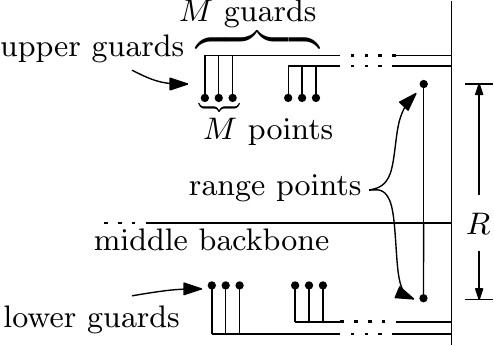}
    \caption{}
    \label{fig:restrictor}
  \end{subfigure}\hfil
  \begin{subfigure}[b]{.3\textwidth}
    \includegraphics[page=3]{images/restrictor-gadget}
    \caption{}\label{fig:blocker}
  \end{subfigure}
  \caption{The range restrictor gadget (a), and a blocker gadget (b).}
\end{figure}

\begin{lemma}
The backbones of the range restrictor can be positioned such that
there are no crossings.  If the middle backbone is positioned
outside the range~$R$, there are at least~$M-1$ crossings.
\label{lem:range-restrictor}
\end{lemma}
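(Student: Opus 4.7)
The first statement is proved by an explicit construction. I would place the middle backbone at a y-coordinate strictly inside $R$ that lies strictly between the common y-level of the upper guard points and the common y-level of the lower guard points. I would then place the $M$ upper guard backbones at distinct y-coordinates above the top of $R$, ordered so that the guard reaching furthest left is stacked topmost, and symmetrically place the $M$ lower guard backbones below the bottom of $R$. Three easy checks finish the argument: the middle backbone is sandwiched between the two guard gadgets and meets none of them; its two vertical connections to the range points are at x-coordinates strictly to the right of all guard points, and so miss every guard vertical segment; and each guard gadget is internally crossing-free by the standard stacked ordering of OPO-like leaders.

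For the second statement, by symmetry it suffices to consider the case in which the middle backbone is at some y strictly above the top of $R$. The key structural observation is that the middle backbone's vertical connection to the \emph{lower} range point is a vertical segment at an x-coordinate strictly to the right of all guard points, whose y-range extends from the middle backbone all the way down to the bottom of $R$. I would then argue that each of the $M$ upper guard backbones contributes at least one crossing, irrespective of where it is placed. If an upper guard backbone is placed above the middle backbone, each of its $M$ vertical segments going down to the guard points (which lie strictly below the middle backbone) crosses the horizontal middle backbone, producing at least $M \ge 1$ crossings. If it is placed strictly below the middle backbone but strictly above the bottom of $R$, its horizontal extent reaches leftward past the x-coordinate of the lower range point, so the guard backbone itself crosses the middle-to-lower-range vertical segment, contributing at least one crossing. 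Finally, if it is placed at or below the bottom of $R$, its $M$ vertical segments reaching upward to the guard points must traverse the y-range of the lower guard gadget, and by the assumption that upper guard $i$ and lower guard $i$ have horizontally overlapping point sets, at least one such vertical segment crosses the backbone of the paired lower guard.

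Summing over the $M$ upper guards yields at least $M$ crossings, which is in particular at least $M-1$, as required. The main obstacle is the third sub-case, where one must carefully exploit the horizontal overlap between paired upper and lower guards---this is precisely what the gadget's design of $M$ overlapping pairs is engineered to guarantee---in order to rule out placements that ``escape'' below $R$ without paying a crossing; the other two sub-cases follow immediately from the extent of the middle backbone and of its vertical connection to the lower range point.
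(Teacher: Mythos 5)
Your crossing-free construction for the first statement is fine and matches the intended drawing. The gap is in the second statement, specifically in your third sub-case. You claim that if an upper guard's backbone is placed at or below the bottom of $R$, then one of its vertical segments must cross the backbone of its paired lower guard, because those segments ``traverse the y-range of the lower guard gadget.'' But what they traverse is the y-range of the lower guard's \emph{points}; the lower guard's \emph{backbone} is a free horizontal line that an adversary may place anywhere --- for instance above the top of $R$ (so that the upper guard's vertical segments, which terminate slightly below the top of $R$ at the upper guard points, never reach it), or below the upper guard's own backbone (so that, again, none of the upper guard's segments meet it). In either placement the upper guard in question contributes no crossing at all under your charging scheme, so the claim that each of the $M$ upper guards contributes at least one crossing fails and the final sum does not go through. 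The configuration is still expensive, but the cost is paid by the \emph{lower} guard --- its backbone or its segments then cross the middle backbone or the vertical segments to the range points --- and your accounting, which only ever charges upper guards, never collects it.

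The paper sidesteps this by counting over all $2M$ guards globally rather than charging each upper guard individually: any guard on the wrong side of the middle backbone costs $M$ crossings outright; any guard between the middle backbone and the far range point crosses the vertical segment joining that range point to the middle backbone; hence fewer than $M-1$ crossings forces at least $M+1$ of the $2M$ guards to lie entirely beyond the far boundary of $R$, and by pigeonhole over the $M$ matched pairs some upper/lower pair lies wholly on that side, which creates $M-1$ crossings between the two guards of the pair regardless of the order of their backbones. If you want to salvage your per-upper-guard argument, your third sub-case needs an additional case analysis on the position of the paired lower guard's backbone, transferring the crossing it necessarily incurs (with the middle backbone, with a range-point segment, or with the upper guard when both backbones lie below $R$) back to the upper guard via the pairing bijection, and checking that no crossing is charged twice.
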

\begin{proof}
The first statement is illustrated by the drawing in
Fig.~\ref{fig:restrictor}.  It remains to show that positioning the
middle backbone outside range~$R$ results in at least~$M-1$
crossings.  Suppose for a contradiction that the middle backbone is
positioned outside range~$R$ and that there are fewer than~$M-1$
crossings.  Assume without loss of generality that the middle
backbone is embedded below range~$R$; the other case is symmetric.

First, observe that all guards must be positioned above the middle
backbone, as a guard below the middle backbone would create~$M$
crossings, namely between the middle backbone and the segments
connecting the points of the guard to its backbone.  Hence the
middle backbone is the lowest.  Now observe that any guard that is
positioned below the upper range point crosses the segment that
connects this range point to the middle backbone.  To avoid
having~$M-1$ crossings, it follows that at least~$M+1$ guards (both
upper and lower) must be positioned above range~$R$. Hence, there is
at least one pair consisting of an upper and a lower guard that are
both positioned above the range~$R$.  This, however, independent of
their ordering, creates at least~$M-1$ crossings; see
Fig.~\ref{fig:guard-pair}, where the two alternatives for the lower
guard are drawn in black and bold gray, respectively.  This
contradicts our assumption.\qed
\end{proof}

\begin{figure}[h]
  \centering
  \includegraphics[page=2]{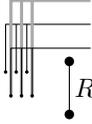}
  \caption{Crossings caused by a pair of an upper and a lower guard that are positioned on the same side outside range $R$.}
  \label{fig:guard-pair}
\end{figure}

Let~$B$ be an axis-aligned rectangular box and let~$R$ be a small
interval that is contained in the range of~y-coordinates spanned
by~$B$.  A \emph{blocker gadget} of \emph{width} $m$ consists of a
backbone that connects to~$2m$ points, half of which are positioned on
the top and bottom side of~$B$, respectively.  Moreover, a range
restrictor gadget is used to restrict the backbone of the blocker to
the range~$R$.  Figure~\ref{fig:blocker} shows an example.  Note that,
due to the range restrictor, this drawing is essentially fixed.  We
say that a backbone \emph{crosses} the blocker gadget if its backbone
crosses the box~$B$.  It is easy to see that any backbone that crosses
a blocker gadget creates~$m$ crossings, where~$m$ is the width of the
blocker.

We are now ready to show that the crossing minimization problem with
flexible y-order of the labels is NP-hard.

\begin{theorem}\label{thm:np-hard}
Given a set of $n$ input points in $k$ different colors and an
integer $Y$ it is NP-complete to decide whether a backbone labeling
with one label per color and flexible y-order of the labels that has
at most $Y$ leader crossings exists.
\end{theorem}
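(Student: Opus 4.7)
The plan is to establish NP membership and then polynomial-time reduce from a canonical NP-hard problem, plausibly MAX-2-SAT, using the two gadgets already introduced.

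For NP membership, a certificate consisting of the y-order of the $|C|$ labels together with the gap position of each finite backbone fully determines a drawing; given such a certificate the number of leader crossings can be counted in polynomial time by a direct sweep, so the problem lies in NP.

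For hardness, I would exploit the two ingredients at hand. For each variable of the source instance, I would introduce a \emph{variable backbone} together with a range restrictor whose allowed interval $R$ is split into two disjoint sub-intervals that encode \emph{true} and \emph{false}; a narrow corridor between them is spanned by a blocker gadget of width larger than the total clause count, so that any solution achieving fewer than the crossing budget must commit the variable backbone cleanly to one sub-interval. For each clause I would place a small cluster of \emph{clause points} on the left whose leaders must reach the variable backbones corresponding to the clause's literals, positioned geometrically so that exactly one additional crossing arises in each clause whose literals all evaluate to \emph{false}, and none otherwise. Guard gadget size $M$ is chosen much larger than the total number of clauses to ensure that no alternative configuration of restrictors, and no reordering that violates the intended global layout, can be cheaper than the intended one; by Lemma~\ref{lem:range-restrictor} any such deviation costs at least $M-1$ crossings and is therefore prohibitive.

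Setting the crossing budget $Y$ equal to the forced base cost of all restrictors and blockers plus $|\Phi|-s$, where $s$ is the MAX-2-SAT target, makes the construction sound in both directions: a satisfying assignment of at least $s$ clauses translates into a drawing with at most $Y$ crossings, and a drawing with at most $Y$ crossings cannot afford to violate any variable gadget and hence induces a consistent assignment satisfying at least $s$ clauses. The main obstacle is parameter tuning and geometric layout: I must choose $M$ and the blocker widths large enough to dominate every "cheating" gain yet polynomial in the input size, and I must arrange the clause gadgets along the x-axis so that their contributions are independent of one another and of the variable-gadget substructure. The technical heart of the reduction lies in certifying that every ordering or restrictor placement other than the intended one is provably worse, so that the crossing count faithfully tracks the number of unsatisfied clauses.
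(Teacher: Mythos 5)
Your NP-membership argument is fine and matches the paper's (guess the label order, then count/minimize crossings for that fixed order in polynomial time). The hardness part, however, has a genuine gap: the clause gadget, which is the entire technical heart of your reduction, is asserted rather than constructed. You claim that clause points can be ``positioned geometrically so that exactly one additional crossing arises in each clause whose literals all evaluate to \emph{false}, and none otherwise,'' but a crossing between a backbone and a vertical segment occurs precisely when the backbone's y-coordinate lies between the point and the other backbone \emph{and} the horizontal extents overlap; engineering this so that a crossing appears for exactly one of the four truth-value combinations of two variables forces specific interleavings of the two candidate positions of one variable with the candidate positions and clause points of the other. Since each variable has only one pair of candidate positions globally but appears in many clauses with different partners (and with different literal signs), these interleaving constraints must all be simultaneously realizable on a single vertical line, and nothing in your sketch shows they are. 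You also do not verify that all remaining crossings (restrictor segments against other backbones, blocker backbones against clause-point segments, etc.) are independent of the variables' choices, which is a substantial bookkeeping burden that cannot be waved away---you yourself flag both issues as ``the main obstacle'' and ``the technical heart,'' which is an accurate self-assessment that the proof is not there yet.

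The paper avoids exactly this difficulty by reducing not from MAX-2-SAT but from the Fixed Linear Crossing Number problem restricted to matchings (Masuda et al.): each edge becomes a \emph{side selector} whose backbone is forced, via blockers and range restrictors, through one of two gaps placed symmetrically in rows $j$ and $-j$ about the x-axis, so that the only binary choice is ``above or below the spine.'' Two interlaced edges then contribute two extra crossings if and only if their selectors choose the same side --- a symmetric, XOR-like condition that the geometry of crossing backbones produces naturally and uniformly, with no per-clause interleaving constraints to satisfy. A meticulous case analysis then shows every other crossing count is a constant $X$ depending only on $G$, giving the exact correspondence $X+2Z$. If you want to salvage your route, you would either need to prove your clause gadgets can coexist (which looks at least as hard as the paper's analysis), or switch to a source problem whose constraints are of the ``same side / different side'' type that finite backbones encode for free.
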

\begin{proof}
The proof is by reduction from the NP-complete Fixed Linear
Crossing Number problem~\cite{mnkf-cmleg-90}, which is defined as
follows. Given a graph $G=(V,E)$, a bijective function $f \colon V
\rightarrow \{1, \dots, |V|\}$, and an integer $Z$, is there a
drawing of~$G$ with the vertices placed on a horizontal line (the
\emph{spine}) in the order specified by~$f$ and the edges drawn as
semi-circles above or below the spine so that there are at most $Z$
edge crossings? Masuda et al.~\cite{mnkf-cmleg-90} showed that the
problem remains NP-complete even if $G$ is a matching.

\begin{figure}[htb]
  \centering
  \includegraphics[scale=1,page=3]{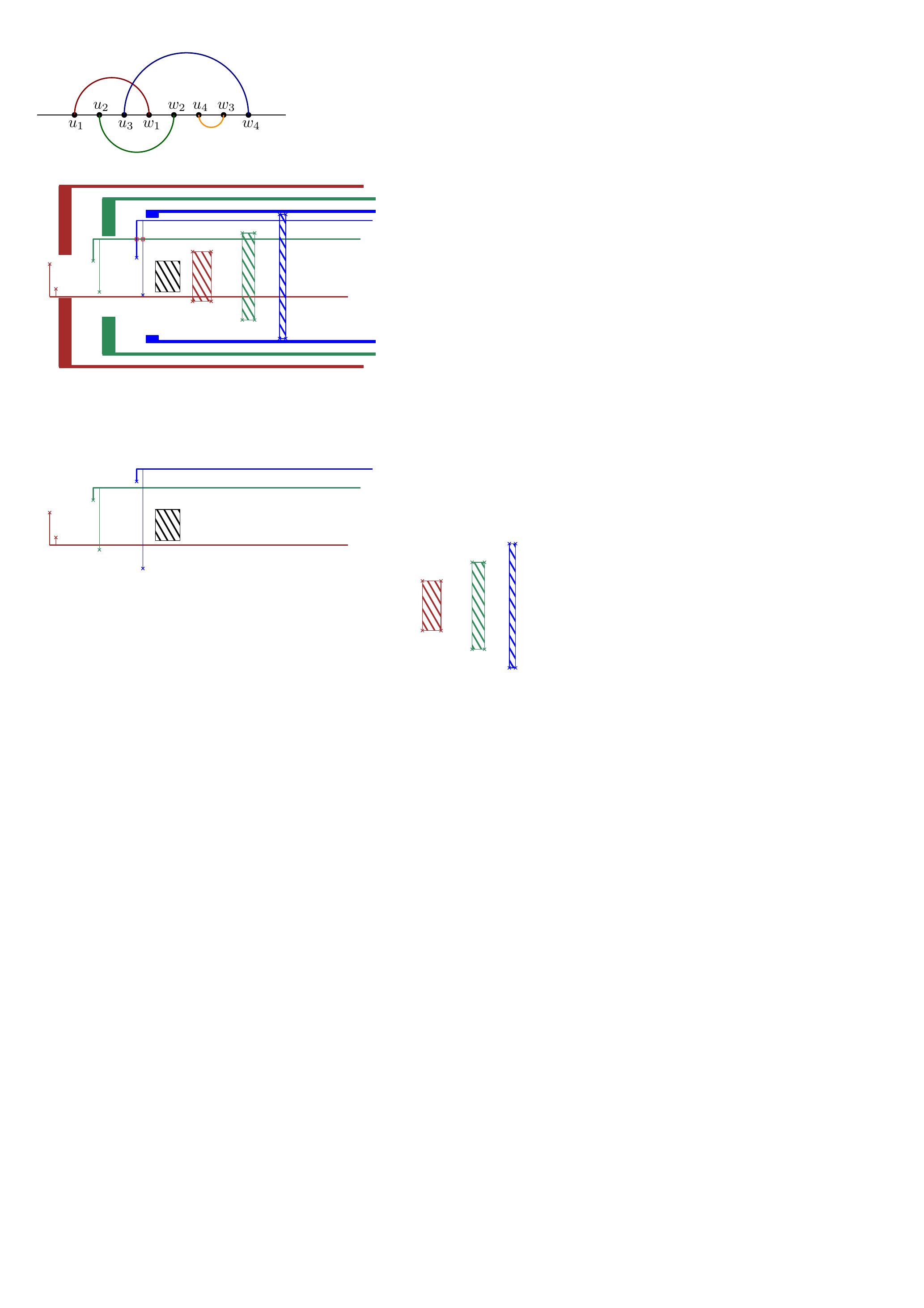}
  \caption{An input instance with four edges.}
  \label{fig:4-edge-input}
\end{figure}

Let $G$ be a matching. Then the number of vertices is even and we
can assume that the vertices $V=\{v_1, \dots, v_{2n}\}$ are indexed
in the order specified by $f$, i.e., $f(v_i) = i$ for all $i$.
Furthermore, we direct every edge $\{v_i,v_j\}$ with $i<j$ from
$v_i$ to $v_j$.
Let $\{u_1, \dots, u_n\}$ be the ordered source vertices and let
$\{w_1, \dots, w_n\}$ be the ordered sink vertices.
Figure~\ref{fig:4-edge-input} shows an example graph~$G$ drawn on a
spine in the specified order.

In our reduction we will create an edge gadget for every edge in
$G$. The gadget consists of five blocker gadgets and one \emph{side
selector gadget}. Each of the six sub-gadgets uses its own color and
thus defines one backbone. The edge gadgets are ordered from left to
right according to the sequence of source vertices $(u_1, \dots,
u_n)$. Figure~\ref{fig:4-edge-gadget} shows a sketch of the instance
$I_G$ created for a matching~$G$ with four edges.

\begin{figure}[tb]
    \centering
    \includegraphics[scale=1,page=4]{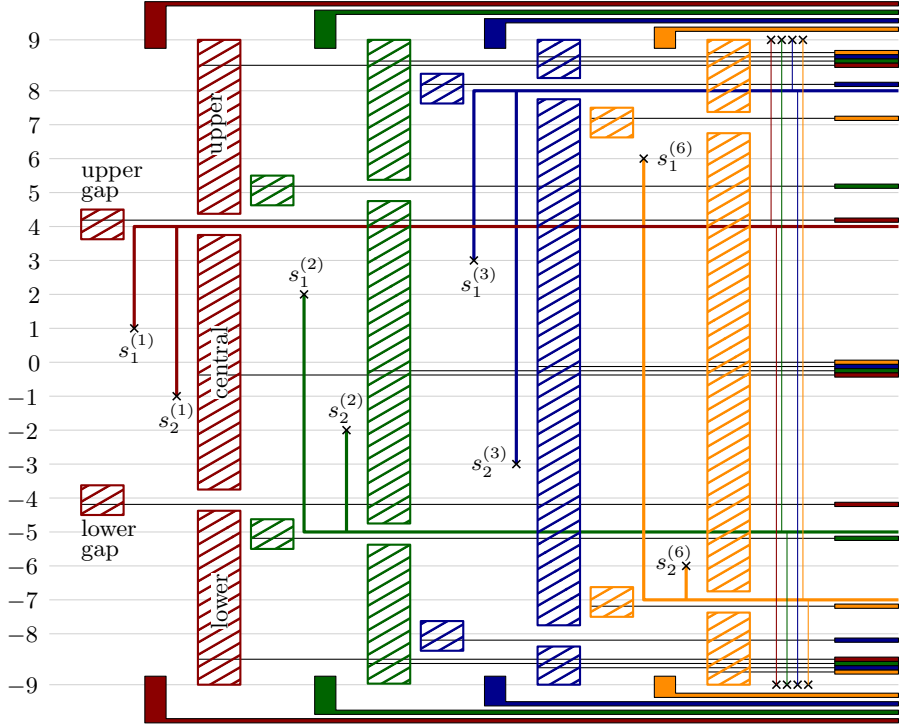}
    \caption{Sketch of the reduction of the graph of Fig.~\ref{fig:4-edge-input} into a backbone labeling instance. Hatched rectangles represent blockers, thick segments represent side selectors, and filled shapes represent guard gadgets or range restrictor gadgets.}
    \label{fig:4-edge-gadget}
\end{figure}

The edge gadgets are placed symmetrically with respect to the
x-axis. We create $2n+1$ special rows above the x-axis and $2n+1$
special rows below, indexed by $-(2n+1),-2n, \dots,0,\dots,
2n,2n+1$. The gadget for an edge $(v_i,v_j)$ uses five blocker
gadgets (denoted as \emph{central, upper, lower, upper gap}, and
\emph{lower gap} blockers) in two different columns to create two
small gaps in rows $j$ and $-j$, see the hatched blocks in the same
color in Fig.~\ref{fig:4-edge-gadget}. The upper and lower blockers
extend vertically to rows $2n+1$ and $-2n-1$. The gaps are intended
to create two alternatives for routing the backbone of the side
selector. Every backbone that starts left of the two gap blockers is
forced to cross at least one of these five blocker gadgets as long
as it is vertically placed between rows $2n+1$ and $-2n-1$. The
blockers have width $m=8 n^2$. Their backbones are fixed to lie
between rows $0$ and $-1$ for the central blocker, between rows $2n$
and $2n+1$ ($-2n$ and $-2n-1$) for the upper (resp.\ lower) blocker,
and between rows $j$ and $j+1$ ($-j$ and $-j-1$) for the upper
(resp.\ lower) gap blocker.

The \emph{side selector} consists of two horizontally spaced
\emph{selector points} $s_1^{(i)}$ and $s_2^{(i)}$ in rows $i$ and
$-i$ located between the left and right blocker columns. They have
the same color and thus define one joint backbone that is supposed
to pass through one of the two gaps in an optimal solution. The~$n$
edge gadgets are placed from left to right in the order of their
source vertices, see Fig.~\ref{fig:4-edge-gadget} for an example.

The backbone of every selector gadget is vertically restricted to
the range between rows~$2n+1$ and~$-2n-1$ in any optimal solution by
augmenting each selector gadget with a range restrictor gadget. This
means that we add two more points for each selector to the right of
all edge gadgets, one in row $2n+1$ and the other in row $-2n-1$.
They are connected to the selector backbone. In combination with a
corresponding upper and lower guard gadget of size $M = \Omega(n^4)$
between the two selector points $s_1^{(i)}$ and $s_2^{(i)}$ this
achieves the range restriction according to
Lemma~\ref{lem:range-restrictor}.

\begin{numclaim}
In a crossing-minimal labeling the backbone of the selector gadget
for every edge $(v_i,v_j)$ passes through one of its two gaps in
rows $j$ or $-j$. \label{clm:gaps}
\end{numclaim}
\begin{proof}
There are basically three different options for placing a selector
backbone: (a)~outside its range restriction, i.e., above row $2n+1$
or below row $-2n-1$, (b) between rows $2n+1$ and $-2n-1$, but
outside one of the two gaps, and (c) in rows $j$ or $-j$, i.e.,
inside one of the gaps. In case (a) we get at least $M= \Omega(n^4)$
crossings by Lemma~\ref{lem:range-restrictor}. So we may assume that
case~(a) never occurs for any selector gadget; we will see that in
this case there are only $O(n^4)$ crossing in total for the selector
gadgets. In cases (b) and~(c) we note that the backbone will cross
one blocker for each edge whose source vertex is right of $v_i$ in
the order $(u_1, \dots, u_n)$. Let $k$ be the number of these edges.
Additionally, in case (b), the backbone crosses one of its own
blockers. In cases (b) and (c) the two vertical segments of the
range restrictor of edge $(v_i,v_j)$ cross every selector and
blocker backbone regardless of the position of its own backbone,
which yields $6n-1$ crossings. Thus, case (b) causes at least
$(k+1)\cdot m + 6n - 1$ crossings.

To give an upper bound on the number of crossings in case (c) we
note that the backbone can cross at most three vertical segments of
any other selector gadget, the two segments connected to its
selector points and one segment connected to a point in either row
$2n+1$ or $-2n-1$, which is part of the range restrictor gadget. The
two vertical segments connected to points $s_1^{(i)}$ and
$s_2^{(i)}$ together will cross the backbone of each central blocker
at most once, the backbones of each pair of upper/lower gap blockers
at most twice, and each selector backbone at most twice. Backbones
of upper and lower blockers are never crossed in case (c). So in
case (c) the segments of the selector gadget cross at most $km +
8n-1$ segments, which is less than the lower bound of $(k+1)m +
6n-1$ in case (b). We conclude that each backbone indeed passes
through one of the gaps in an optimal solution. Any violation of
this rule would create at least $m$ additional crossings, which is
more than what an arbitrary assignment of selector backbones to gaps
yields. \qed
\end{proof}

Next, we show how the number of crossings in the backbone labeling
instance relates to the number of crossings in the Fixed Linear
Crossing Number problem. There is a large number of unavoidable
crossings regardless of the backbone positions of the selector
gadgets. By Claim~\ref{clm:gaps} and the fact that violating any
range restriction immediately causes $M$ crossings, we can assume
that every backbone adheres to the rules, i.e., stays within its
range as defined by the range restriction gadgets or passes through
one of its two gaps.

\begin{numclaim}
An optimal solution of the backbone labeling instance $I_G$ created
for a matching $G$ with $n$ edges has $X+2Z$ crossings, where $X$ is
a constant depending on $G$ and $Z$ is the minimum number of
crossings of $G$ in the Fixed Linear Crossing Number
problem.\label{clm:crossings}
\end{numclaim}
\begin{proof}
Aside from guard backbones, which never have crossings, there are
two types of backbones in our construction, blocker and selector
backbones. We argue separately for all four possible types of
crossings and distinguish fixed crossings that must occur and
variable crossings that depend on the placement of the selector
backbones.

\begin{enumerate}[(I)]
    \item crossings between blocker backbones and vertical blocker segments,
    \item crossings between blocker backbones and vertical selector segments,
    \item crossings between selector backbones and vertical blocker segments, and
    \item crossings between selector backbones and vertical selector segments.
\end{enumerate}

\textbf{Case I:} By construction each blocker backbone must
intersect exactly one blocker gadget of width $m$ for each edge
gadget to its right. Thus we obtain $X_1 = 5m \sum_{i=1}^{n-1} i =
5m (n^2-n)/2$ fixed crossings.

\textbf{Case II:} Each blocker backbone crosses for each edge
exactly one vertical selector segment that is part of the range
restrictor gadget on the right-hand side of our construction. Each
central blocker backbone additionally crosses for each edge gadget
to its right one vertical segment incident to one of the selector
points, regardless of the selector position. The two gap blocker
backbones for gaps in rows $j$ and $-j$ together cause two
additional crossings for each edge gadget to its right whose target
vertex $v_k$ satisfies $k>j$. To see this we need to distinguish two
cases. Let $e=(v_i,v_k)$ be the edge of an edge gadget with $k>j$.
If $i<j$, then both vertical selector segments either cross the
lower gap blocker backbone or they both cross the upper gap blocker
backbone (see edges $(v_1,v_4)$ and $(v_2,v_5)$ in
Fig.~\ref{fig:4-edge-gadget}). If $i>j$, then one of the two
vertical selector segments crosses both gap blocker backbones, and
the other one crosses none (see edges $(v_1,v_4)$ and $(v_6,v_7)$ in
Fig.~\ref{fig:4-edge-gadget}). The backbones of the upper and lower
blockers do not cross any other vertical selector segments.

Let $\kappa = |\{\{(v_i,v_j),(v_k,v_l)\} \in E^2 \mid i < k \text{
and } j<l\}| = O(n^2)$. Then we obtain $X_2 = 5n^2 + (n^2 -n)/2 +
2\kappa$ fixed crossings from Case II.

\textbf{Case III:} Each selector backbone that passes through one of
its gaps crosses exactly one blocker gadget for each edge gadget to
its right. Thus we obtain $X_3 = m(n^2-n)/2$ fixed crossings in Case
III.

\textbf{Case IV:} Let $e=(v_i,v_j)$ and $f=(v_k,v_l)$ be two edges
in $G$, and let $i<k$. Then there are three sub-cases: (a) $e$ and
$f$ are \emph{sequential}, i.e., $i<j<k<l$, (b) $e$ and $f$ are
\emph{nested}, i.e., $i<k<l<j$, or (c) $e$ and $f$ are
\emph{interlaced}, i.e., $i<k<j<l$. For every pair of sequential
edges there is exactly one crossing, regardless of the gap
assignments (see edges $(v_1,v_4)$ and $(v_6,v_7)$ in
Fig.~\ref{fig:4-edge-gadget}). For every pair of nested edges there
is no crossing, regardless of the gap assignments (see edges
$(v_3,v_8)$ and $(v_6,v_7)$ in Fig.~\ref{fig:4-edge-gadget}).
Finally, for every pair of interlaced edges there are no crossings
if they are assigned to opposite sides of the x-axis or two
crossings if they are assigned to the same side. So interlaced edges
do not contribute to the number of fixed crossings. Let $\tau =
|\{\{(v_i,v_j),(v_k,v_l)\} \in E^2 \mid i < j < k<l\}| = O(n^2)$ be
the number of sequential edge pairs. Then we obtain $X_4 = \tau$
fixed crossings from Case IV.

From the discussion of the four cases we can  immediately see that
all  crossings are fixed, except for those related to interlaced
edge pairs (see for example edges $(v_1,v_4)$ and $(v_3,v_8)$ or
$(v_2,v_5)$ in Fig.~\ref{fig:4-edge-gadget}). But these are exactly
the edge pairs that  create crossings in the Fixed Linear Crossing
Number problem if assigned to the same side of the spine. As
discussed in Case~IV the selector gadgets of two interlaced edges
create two extra crossings if and only if they are assigned to gaps
on the same side of the x-axis. If we create a bijection that maps a
selector backbone placed in the upper gap to an edge drawn above the
spine, and a selector backbone in the lower gap to an edge drawn
below the spine, we see that an edge crossing on the same side of
the spine in a drawing of $G$ corresponds to two extra crossings in
a labeling of $I_G$ and vice versa. So if $Z$ is the minimum number
of crossings in a spine drawing of $G$, then $2Z$ is the minimum
number of variable crossings in a labeling of $I_G$. Setting
$X=X_1+X_2+X_3+X_4$ this proves Claim~\ref{clm:crossings}. \qed
\end{proof}

It turns out that almost all crossings are fixed (yielding the
number $X$), except for those of selector backbones with vertical
selector segments for which the two underlying edges $(v_i,v_j)$ and
$(v_k,v_l)$ with $i<k$ are \emph{interlaced}, i.e., $i<k<j<l$ holds.
Each pair of interlaced edges creates two crossings in the reduction
if they are assigned to the same side of row $0$ and no crossing
otherwise (see for example edges $(v_1,v_4)$ and $(v_3,v_8)$ or
$(v_2,v_5)$ in Fig.~\ref{fig:4-edge-gadget}). This adds up to at
least $2Z$ crossings, which shows the claim and concludes the
hardness reduction. Furthermore, we can guess an order of the
backbones and then apply
Theorem~\ref{theorem:min-crossings-fixed-order-finite-backbones} to
compute the minimum crossing number for this order.
This shows NP-completeness. \qed
\end{proof}

\paragraph{Acknowledgements.}
This work was started at the Bertinoro Workshop on Graph Drawing
2013. M.N. received financial support by the Concept for the Future
of KIT.  The work of M.~A.~Bekos and A.~Symvonis is implemented
within the framework of the Action ``Supporting Postdoctoral
Researchers'' of the Operational Program ``Education and Lifelong
Learning'' (Action's Beneficiary: General Secretariat for Research
and Technology), and is co-financed by the European Social Fund
(ESF) and the Greek State.  We also acknowledge partial support by
GRADR -- EUROGIGA project no. 10-EuroGIGA-OP-003.

{
\bibliographystyle{abbrv}
\bibliography{mtobl}

\begin{thebibliography}{10}

\bibitem{BekosKSW2007}
M.~A. Bekos, M.~Kaufmann, A.~Symvonis, and A.~Wolff.
\newblock Boundary labeling: Models and efficient algorithms for rectangular
  maps.
\newblock {\em Computational Geometry}, 36(3):215 -- 236, 2007.

\bibitem{FormannW1991}
M.~Formann and F.~Wagner.
\newblock A packing problem with applications to lettering of maps.
\newblock In {\em Proc. Symp. Comput. Geom. (SCG'91)}, pages 281--288. ACM,
  1991.

\bibitem{Hassin1991395}
R.~Hassin and A.~Tamir.
\newblock Improved complexity bounds for location problems on the real line.
\newblock {\em Operations Research Letters}, 10(7):395 -- 402, 1991.

\bibitem{Kau09}
M.~Kaufmann.
\newblock On map labeling with leaders.
\newblock In S.~Albers, H.~Alt, and S.~N{\"a}her, editors, {\em Festschrift
  Mehlhorn}, volume 5760 of {\em LNCS}, pages 290--304. Springer-Verlag, 2009.

\bibitem{k-hmap-55}
H.~W. Kuhn.
\newblock The {H}ungarian method for the assignment problem.
\newblock {\em Naval Research Logistics Quarterly}, 2(1--2):83--97, 1955.

\bibitem{Lin2010}
C.-C. Lin.
\newblock Crossing-free many-to-one boundary labeling with hyperleaders.
\newblock In {\em Proc. {IEEE} Pacific Visualization Symp. (PacificVis'10)},
  pages 185--192. IEEE, 2010.

\bibitem{LinKY08}
C.-C. Lin, H.-J. Kao, and H.-C. Yen.
\newblock Many-to-one boundary labeling.
\newblock {\em Journal of Graph Algorithms and Applications}, 12(3):319--356,
  2008.

\bibitem{mnkf-cmleg-90}
S.~Masuda, K.~Nakajima, T.~Kashiwabara, and T.~Fujisawa.
\newblock Crossing minimization in linear embeddings of graphs.
\newblock {\em {IEEE} Trans. Computers}, 39(1):124--127, 1990.

\bibitem{Neyer2001}
G.~Neyer.
\newblock Map labeling with application to graph drawing.
\newblock In M.~Kaufmann and D.~Wagner, editors, {\em Drawing Graphs}, volume
  2025 of {\em LNCS}, pages 247--273. Springer, 2001.

\bibitem{WS96}
A.~Wolff and T.~Strijk.
\newblock The {M}ap-{L}abeling {B}ibliography (1996). {O}n line:
  http://i11www.ira.uka.de/map-labeling/bibliography.

\end{thebibliography}
}

\end{document}